\newtheorem{theorem}{Theorem}
\theoremstyle{remark}
\newtheorem*{remark}{Remark}
\newcommand{\1}{\mathbbm{1}}
\newcommand{\CC}{{\normalfont\texttt{CC}}}
\newcommand{\CHSH}{{\normalfont\texttt{CHSH}}}
\newcommand{\eg}{\emph{e.g.} }
\newcommand{\I}{\mathtt{I}}
\newcommand{\ie}{\emph{i.e.} }
\renewcommand{\iff}{\emph{if, and only if,} }
\newcommand{\Maj}{\mathtt{Maj}}
\newcommand{\NS}{\mathcal{N\!S}}
\renewcommand{\P}{\mathtt{P}}
\newcommand{\PP}{\mathbb{P}}
\newcommand{\PR}{\mathtt{PR}}
\newcommand{\PRbar}{\overline{\mathtt{PR}}}
\newcommand{\PRprime}{\mathtt{PR'}}
\newcommand{\PRprimebar}{\overline{\mathtt{PR'}}}
\newcommand{\Protocol}{\mathcal{P}}
\newcommand{\R}{\mathbb{R}}
\newcommand{\SR}{\mathtt{SR}}
\newcommand{\SRbar}{\overline{\mathtt{SR}}}
\newcommand{\wP}{\widetilde\P}
\definecolor{darkred}{RGB}{180, 0, 0}
\definecolor{blue7}{rgb}{0.26, 0.44, 0.66}
\begin{document}

\title{Extending the Known Region of Nonlocal Boxes\\that Collapse Communication Complexity}
%\thanks{This work is based in part on~\cite{Marcothese}.}

\author{Pierre Botteron${}^1$}
\email{pierre.botteron@math.univ-toulouse.fr}

\author{Anne Broadbent${}^2$}
\email{abroadbe@uottawa.ca}

\author{Marc-Olivier Proulx${}^2$}
\email{mprou026@uottawa.ca}

\affiliation{${}^1$Institut de Mathématiques de Toulouse, Université de Toulouse (Paul Sabatier), France;}
\affiliation{${}^2$Nexus for Quantum Technologies, University of Ottawa, Canada.}

\begin{abstract}
Non-signalling boxes ($\NS$) are theoretical resources
defined by the principle of no-faster-than-light communication. They generalize quantum correlations, and some of them are known to collapse communication complexity ($\CC$).
However, this collapse is strongly believed to be unachievable in Nature, so its study provides intuition on which theories are unrealistic.
In the present letter, we find a better sufficient condition for a nonlocal box to collapse $\CC$, thus extending the known collapsing region. In some slices of $\NS$, we show this condition coincides with an area outside of an ellipse.
\end{abstract}

\maketitle

Entanglement is a fascinating relation linking pairs of particles.
It was experimentally confirmed in the late twentieth century \cite{CS78, AGR82, HBD+15}, and
it has the striking property of \emph{nonlocality}: two entangled particles, although being very distantly separated, provide strongly correlated results when their state is measured, yet the result of those measurements could not be known ahead of time \cite{Bell64,CHSH69}.

Nevertheless, this powerful nonlocality described by quantum mechanics is limited by  Tsirelson's famous bound~\cite{Tsirelson80}.
It is then natural to wonder if there could exist a more general theory than quantum mechanics to accurately describe the world, with more powerful nonlocality than quantum entanglement.
To that end, the common framework is the one of \emph{nonlocal boxes} (NLB)~\cite{PR94}. An NLB is a theoretical tool that generalizes the notions of shared randomness, quantum correlation and non-signalling correlation.
	As drawn in FIG.~\ref{fig: CC game}, an NLB has two input bits and two outputs bits. Alice has access only to the left side, and Bob to the right side. Immediately after inputting $x$ in the box, Alice receives $a$, whether or not Bob has already inputted his bit $y$.
	More formally, an NLB is characterized by a conditional distribution $\P(a,b\,|\,x,y)$ that satisfies the non-signalling conditions \cite{Shannon61, PR94}:
		$\textstyle \sum_{\tilde b} \P\big(a,\tilde b\,\big|\,x,0\big) = \sum_{\tilde b} \P\big(a,\tilde b\,\big|\,x,1\big)$ and
		$\textstyle \sum_{\tilde a} \P\big(\tilde a,b\,\big|\,0,y\big) = \sum_{\tilde a} \P\big(\tilde a,b\,\big|\,1,y\big)$
	for all $a,b,x,y\in\{0,1\}$.
	Denote by $\NS$ the set of all NLBs, which is an $8$-dimensional convex set with finitely many extremal points \cite{BLM+05, GKW+18}.

	Among the most famous boxes, there is the $\PR$ box, introduced by Popescu and Rohrlich \cite{PR94}, taking value $1/2$ if $a\oplus b=x\,y$, and $0$ otherwise, where the symbol $\oplus$ denotes the sum modulo $2$. Note that this box is designed to perfectly win at the $\CHSH$ game \cite{CHSH69}.
	In this work, we will also use the $\PR'$ box, taking value $1/2$ if $a\oplus b=(x\oplus 1)\,(y\oplus 1)$, and $0$ otherwise, which perfectly wins at $\CHSH'$ \cite{Branciard11} (same game as $\CHSH$ but with the rule $a\oplus b=(x\oplus 1)\,(y\oplus 1)$).
	In addition, we will use the fully random box $\I$, taking value $1/4$ for all inputs and outputs, and the shared randomness box $\SR$, taking value $1/2$ if $a=b$, and $0$ otherwise, independently of the entries $x$ and $y$. Note that $\SR$ is nothing more than a shared random bit.
	A bar above a box means that the behavior is the opposite one: $\overline{\P}:=1-\P$. \looseness=-1

\begin{figure}
	\definecolor{blue9}{rgb}{0.1, 0.35, 0.45}

\definecolor{qqttzz}{rgb}{0.,0.2,0.6}
\definecolor{rxsfyq}{rgb}{0.09019607843137255,0.1843137254901961,0.5019607843137255}
\definecolor{qqwuqq}{rgb}{0.,0.39215686274509803,0.}
\definecolor{wwqqzz}{rgb}{0.4,0.,0.6}
\definecolor{zzttqq}{rgb}{0.6,0.2,0.}

\newcommand{\EpaisseurTraits}{1pt}

\begin{tikzpicture}[line cap=round,line join=round,>=triangle 45,x=1.0cm,y=1.0cm, every node/.style={scale=0.6}, scale=0.5]

% Purple Rectangle
\fill[line width=2.pt,color=wwqqzz,fill=wwqqzz,fill opacity=0.10000000149011612] (-1.,1.) -- (-4.,1.) -- (-4.,0.2) -- (-1.,0.2) -- cycle;
\draw [line width=\EpaisseurTraits, color=wwqqzz] (-1.,1.)-- (-4.,1.);
\draw [line width=\EpaisseurTraits, color=wwqqzz] (-4.,1.)-- (-4.,0.2);
\draw [line width=\EpaisseurTraits, color=wwqqzz] (-4.,0.2)-- (-1.,0.2);
\draw [line width=\EpaisseurTraits, color=wwqqzz] (-1.,0.2)-- (-1.,1.);
\draw[color=wwqqzz] (-2.55, 0.6) node {\textbf{\sffamily Alice}};

% Green Rectangle
\fill[line width=2.pt,color=qqwuqq,fill=qqwuqq,fill opacity=0.10000000149011612] (4.,1.) -- (7.,1.) -- (7.,0.2) -- (3.9957320768894165,0.20095767551444843) -- cycle;
\draw [line width=\EpaisseurTraits, color=qqwuqq] (4.,1.)-- (7.,1.);
\draw [line width=\EpaisseurTraits, color=qqwuqq] (7.,1.)-- (7.,0.2);
\draw [line width=\EpaisseurTraits, color=qqwuqq] (7.,0.2)-- (4,0.2);
\draw [line width=\EpaisseurTraits, color=qqwuqq] (4,0.2)-- (4.,1.);
\draw[color=qqwuqq] (5.5, 0.6) node {\textbf{\sffamily Bob}};

% Brown rectangle
\fill[line width=2.pt,color=zzttqq,fill=zzttqq,fill opacity=0.10000000149011612] (0.,3.) -- (3.,3.) -- (3.,3.8) -- (0.,3.8) -- cycle;
\draw [line width=\EpaisseurTraits, color=zzttqq] (0.,3.)-- (3.,3.);
\draw [line width=\EpaisseurTraits, color=zzttqq] (3.,3.)-- (3.,3.8);
\draw [line width=\EpaisseurTraits, color=zzttqq] (3.,3.8)-- (0.,3.8);
\draw [line width=\EpaisseurTraits, color=zzttqq] (0.,3.8)-- (0.,3.);
\draw[color=zzttqq] (1.48, 3.4) node {\textbf{\sffamily Referee}};

%%%%%%%%%%%%%%%%
%%%%%%%%%%%%%%%%
%%%%%%%%%%%%%%%%
%%%%%%%%%%%%%%%%
%%%%%%%%%%%%%%%%

% up-right path 

\draw [line width=\EpaisseurTraits] (3.,3.4)-- (5.,3.4);
\draw [line width=\EpaisseurTraits] (5.5,2.9)-- (5.5,1.3);
%%% 1st arrow
%%% 2nd arrow
\draw [line width=\EpaisseurTraits] (5.5, 1. + 0.3) -- (5.300185067839587,1.27 + 0.3);
\draw [line width=\EpaisseurTraits] (5.5, 1 + 0.3) -- (5.699814932160412,1.27 + 0.3);
%%% rounded part
\draw [shift={(5.,2.9)},line width=\EpaisseurTraits]  plot[domain=0.:1.5707963267948966,variable=\t]({1.*0.5*cos(\t r)+0.*0.5*sin(\t r)},{0.*0.5*cos(\t r)+1.*0.5*sin(\t r)});
\draw (4.1, 4) node[anchor=north] {$Y$};

% up-left path

\draw [line width=\EpaisseurTraits] (0.,3.4)-- (-2.,3.4);
\draw [line width=\EpaisseurTraits] (-2.5,2.9)-- (-2.5,1.3);
%%% 1st arrow
%%% rounded part
\draw [shift={(-2.,2.9)},line width=\EpaisseurTraits]  plot[domain=1.5707963267948966:3.141592653589793,variable=\t]({1.*0.5*cos(\t r)+0.*0.5*sin(\t r)},{0.*0.5*cos(\t r)+1.*0.5*sin(\t r)});
%%% 2nd arrow
\draw [line width=\EpaisseurTraits] (-2.5,1 + 0.3) -- (-2.699814932160413,1.27 + 0.3);
\draw [line width=\EpaisseurTraits] (-2.5,1. + 0.3) -- (-2.300185067839587,1.27 + 0.3);
\draw (-1.1, 4) node[anchor=north] {$X$};

% Brown rectangle
\fill[line width=2.pt,color=zzttqq,fill=zzttqq,fill opacity=0.10000000149011612] (0.,3.) -- (3.,3.) -- (3.,3.8) -- (0.,3.8) -- cycle;
\draw [line width=\EpaisseurTraits, color=zzttqq] (0.,3.)-- (3.,3.);
\draw [line width=\EpaisseurTraits, color=zzttqq] (3.,3.)-- (3.,3.8);
\draw [line width=\EpaisseurTraits, color=zzttqq] (3.,3.8)-- (0.,3.8);
\draw [line width=\EpaisseurTraits, color=zzttqq] (0.,3.8)-- (0.,3.);
\draw[color=zzttqq] (1.48, 3.4) node {\textbf{\sffamily Referee}};

%%%%%%%%%%%%%%%%
%%%%%%%%%%%%%%%%
%%%%%%%%%%%%%%%%
%%%%%%%%%%%%%%%%
%%%%%%%%%%%%%%%%

% Black rectangle
\fill[line width=2.pt,fill=black,fill opacity=0.10000000149011612] (1.4,2.) -- (1.4,-1.) -- (1.6,-1.) -- (1.6,2.) -- cycle;
\draw [line width=\EpaisseurTraits] (1.4,2.)-- (1.4,-1.);
\draw [line width=\EpaisseurTraits] (1.4,-1.)-- (1.6,-1.);
\draw [line width=\EpaisseurTraits] (1.6,-1.)-- (1.6,2.);
\draw [line width=\EpaisseurTraits] (1.6,2.)-- (1.4,2.);

%%%%%%%%%%%%%%%%
%%%%%%%%%%%%%%%%
%%%%%%%%%%%%%%%%
%%%%%%%%%%%%%%%%
%%%%%%%%%%%%%%%%

% down-left path

\draw [line width=\EpaisseurTraits] (-1.,0.6)-- (0.2,0.6);
\draw [line width=\EpaisseurTraits] (0.6,1.)-- (0.6,3 - 0.3);
%%% 1st arrow
%%% 2nd arrow
\draw [line width=\EpaisseurTraits] (0.6, 3 - 0.3) -- (0.7998149321604123,2.73 - 0.3);
\draw [line width=\EpaisseurTraits] (0.6,3 - 0.3) -- (0.40018506783958707,2.73 - 0.3);
%%% Rounded part
\draw [shift={(0.2,1.)},line width=\EpaisseurTraits]  plot[domain=-1.5707963267948966:0.,variable=\t]({1.*0.4*cos(\t r)+0.*0.4*sin(\t r)},{0.*0.4*cos(\t r)+1.*0.4*sin(\t r)});
\draw (-0.57, 0.55) node[anchor=north west] {$a'$};

% Purple Rectangle (again)
\fill[line width=2.pt,color=wwqqzz,fill=wwqqzz,fill opacity=0.10000000149011612] (-1.,1.) -- (-4.,1.) -- (-4.,0.2) -- (-1.,0.2) -- cycle;
\draw [line width=\EpaisseurTraits, color=wwqqzz] (-1.,1.)-- (-4.,1.);
\draw [line width=\EpaisseurTraits, color=wwqqzz] (-4.,1.)-- (-4.,0.2);
\draw [line width=\EpaisseurTraits, color=wwqqzz] (-4.,0.2)-- (-1.,0.2);
\draw [line width=\EpaisseurTraits, color=wwqqzz] (-1.,0.2)-- (-1.,1.);
\draw[color=wwqqzz] (-2.55, 0.6) node {\textbf{\sffamily Alice}};

%%%%%%%%%%%%%%%%
%%%%%%%%%%%%%%%%
%%%%%%%%%%%%%%%%
%%%%%%%%%%%%%%%%
%%%%%%%%%%%%%%%%

% Arrows in the NONLOCAL BOX

\newcommand{\mynumber}{0.15}

%%% up-left arrow
\draw [line width=\EpaisseurTraits] (-0.2,-1.2 + \mynumber)-- (-0.1,-1.04 + \mynumber);
\draw [line width=\EpaisseurTraits] (-0.32,-1.06 + \mynumber)-- (-0.2,-1.2 + \mynumber);
\draw [shift={(-0.8,-1.2)},line width=\EpaisseurTraits]  plot[domain=0.:1.5707963267948966,variable=\t]({1.*0.6*cos(\t r)+0.*0.6*sin(\t r)},{0.*0.6*cos(\t r)+1.*0.6*sin(\t r) + \mynumber});
%%% up-right arrow
\draw [line width=\EpaisseurTraits] (3.2,-1.2 + \mynumber)-- (3.1,-1.04 + \mynumber);
\draw [line width=\EpaisseurTraits] (3.2,-1.2 + \mynumber)-- (3.32,-1.06 + \mynumber);
\draw [shift={(3.8,-1.2)},line width=\EpaisseurTraits]  plot[domain=1.5707963267948966:3.141592653589793,variable=\t]({1.*0.6*cos(\t r)+0.*0.6*sin(\t r)},{0.*0.6*cos(\t r)+1.*0.6*sin(\t r) +  \mynumber});
%%% bottom-left arrow
\draw [line width=\EpaisseurTraits] (-0.8,-2.6)-- (-0.68,-2.48);
\draw [line width=\EpaisseurTraits] (-0.8,-2.6)-- (-0.66,-2.7);
\draw [shift={(3.8,-2.)},line width=\EpaisseurTraits]  plot[domain=3.141592653589793:4.71238898038469,variable=\t]({1.*0.6*cos(\t r)+0.*0.6*sin(\t r)},{0.*0.6*cos(\t r)+1.*0.6*sin(\t r)});
%%% bottom-right arrow
\draw [line width=\EpaisseurTraits] (3.8,-2.6)-- (3.68,-2.48);
\draw [line width=\EpaisseurTraits] (3.8,-2.6)-- (3.66,-2.7);
\draw [shift={(-0.8,-2.)},line width=\EpaisseurTraits]  plot[domain=-1.5707963267948966:0.,variable=\t]({1.*0.6*cos(\t r)+0.*0.6*sin(\t r)},{0.*0.6*cos(\t r)+1.*0.6*sin(\t r)});

%%%%%% RECTANGLES

% Blue Rectangle
\fill[line width=2.pt, color=blue9, fill=blue9, fill opacity=0.15000000596046448] (-1.,-2.) -- (4.,-2.) -- (4.,-1.2) -- (-1.,-1.2) -- cycle;
% old color: rxsfyq
\draw [line width=\EpaisseurTraits, color=blue9] (-1.,-2.)-- (4.,-2.);
\draw [line width=\EpaisseurTraits, color=blue9] (4.,-2.)-- (4.,-1.2);
\draw [line width=\EpaisseurTraits, color=blue9] (4.,-1.2)-- (-1.,-1.2);
\draw [line width=\EpaisseurTraits, color=blue9] (-1.,-1.2)-- (-1.,-2.);

%%%%%%%%%%%%%%%%
%%%%%%%%%%%%%%%%
%%%%%%%%%%%%%%%%
%%%%%%%%%%%%%%%%
%%%%%%%%%%%%%%%%

%%%%% TEXT 

\draw (-0.88, -0.43) node[anchor=east] {$x$};
\draw (3.8, -0.43) node[anchor=west] {$y$};
\draw (-0.88, -2.6) node[anchor=east] {$a$};
\draw (3.8, -2.6) node[anchor=west] {$b$};

\draw [color=blue9](1.5, -1.6) node {\textbf{\sffamily Nonlocal box}};

%%%%%%%%%%%%%%%%
%%%%%%%%%%%%%%%%
%%%%%%%%%%%%%%%%
%%%%%%%%%%%%%%%%
%%%%%%%%%%%%%%%%

% Bob -> Alice

\newcommand{\x}{-4.9}
\newcommand{\y}{-2.8}
\newcommand{\yy}{-1.1}

\draw [line width=\EpaisseurTraits, dashed] (5.5, 3.+\y)-- (5.5, 1.+\y+\yy);
\draw [line width=\EpaisseurTraits, dashed] (5.5-0.4, 0.6+\y+\yy)-- (2.8+\x, 0.6+\y+\yy);
\draw [line width=\EpaisseurTraits, dashed] (2.4+\x, 1.+\y+\yy)-- (2.4+\x, 3.- 0.3+\y);
%%% Rounded part
\draw [shift={(5.5-0.4, 1.+\y+\yy)},line width=\EpaisseurTraits, dashed]  plot[domain=-1.5707963267948966 : 0.,variable=\t]({1.*0.4*cos(\t r)},{1.*0.4*sin(\t r)});
%%% Rounded part
\draw [shift={(2.8+\x, 1.+\y+\yy)},line width=\EpaisseurTraits, dashed]  plot[domain=3.141592653589793 : 4.71238898038469,variable=\t]({1.*0.4*cos(\t r)},{1.*0.4*sin(\t r)});
%%%% 2nd arrow
\draw [line width=\EpaisseurTraits] (2.4+\x, 3- 0.3+\y) -- (2.599814932160412+\x, 2.73 - 0.3+\y);
\draw [line width=\EpaisseurTraits] (2.4+\x, 3 - 0.3+\y) -- (2.2001850678395867+\x, 2.73 - 0.3+\y);
\draw (1.5, 0.5+\y+\yy) node[anchor=north] {\small Communication string $B$};

% Green Rectangle (again)
\fill[line width=2.pt,color=qqwuqq,fill=qqwuqq,fill opacity=0.10000000149011612] (4.,1.) -- (7.,1.) -- (7.,0.2) -- (3.9957320768894165,0.20095767551444843) -- cycle;
\draw [line width=\EpaisseurTraits, color=qqwuqq] (4.,1.)-- (7.,1.);
\draw [line width=\EpaisseurTraits, color=qqwuqq] (7.,1.)-- (7.,0.2);
\draw [line width=\EpaisseurTraits, color=qqwuqq] (7.,0.2)-- (4,0.2);
\draw [line width=\EpaisseurTraits, color=qqwuqq] (4,0.2)-- (4.,1.);
\draw[color=qqwuqq] (5.5, 0.6) node {\textbf{\sffamily Bob}};

\end{tikzpicture}
	\caption{\emph{Communication complexity game.} (Colors online.) Lowercase letters $a,a'\!,b,x,y$ are bits, and capital letters are strings: $X\in\{0,1\}^n$, $Y\in\{0,1\}^m$ and $B\in\{0,1\}^k$.
	Let $f:\{0,1\}^n\times\{0,1\}^m\to\{0,1\}$ be known.
	Once the game starts, Alice and Bob are spacelike separated
	and the referee sends them the respective strings $X$ and $Y$.
	The goal is that Alice answers a bit $a'$ such that $a'=f(X,Y)$.
	To achieve it, Bob is allowed to send some \emph{communication bits} to Alice, but these bits are costly so he wants to send as few as possible.
	They may also use as many copies as they want of an NLB.
	}
	\label{fig: CC game}
\end{figure}
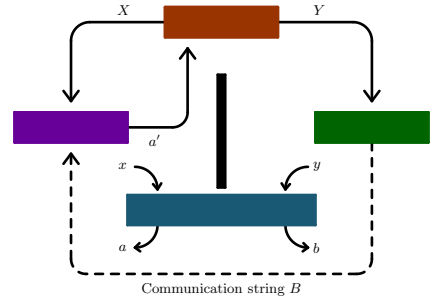

The notion of communication complexity ($\CC$) was introduced by Yao \cite{Yao79} and was widely studied in the late twentieth century \cite{KN96, RY20}. It can be viewed as a game as presented in FIG. \ref{fig: CC game}.
	The communication complexity $\CC(f)$ of $f$ is defined as the minimal amount of communication bits required to win at this game for any question strings $X$ and $Y$.
	One can see that $\CC(f)\leq m$ always, and that $\CC(f)\geq1$ if $f$ is not constant in $Y$.
	There exists also a probabilistic version of $\CC$ \cite{BBLMTU06}, in which Alice is allowed to make some mistakes: $\CC^p(f)$ (for some $p\in[0,1]$) is defined as the minimal amount of communication bits required to win with probability $\geq p$, for any $X$ and $Y$.
	Note that $\CC^{1/2}(f)=0$ for any $f$, using the strategy in which Alice always answers a uniformly random bit $a'\in\{0,1\}$.
	We say that $\CC$ \emph{collapses} (or that it is \emph{trivial}) when a single bit of communication is enough and that the error is bounded, \ie when there exists $p>1/2$ such that for all $f$ we have $\CC^p(f)\leq1$. This is strongly believed to be impossible in Nature since it would imply the absurdity that a single bit of communication is sufficient to distantly compute any $f$ \cite{vD99, BBLMTU06, BS09, BG15}.

\begin{figure*}
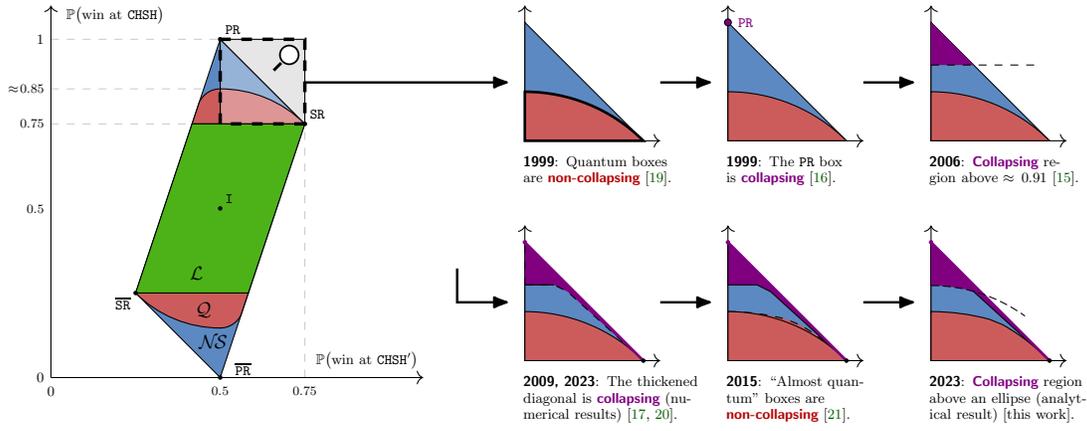

	\subfile{Historical_overview}
	\caption{\emph{Historical overview of collapsing boxes}, drawn in the slice of $\NS$ passing through $\PR$ and $\SR$ and $\I$. (Colors online.) In \textbf{\sffamily\color{darkred}red} and \textbf{\sffamily\color{violet}purple} are represented respectively the non-collapsing and the collapsing boxes.
	In \textbf{\sffamily\color{blue7}blue} is drawn the region of boxes for which we do not know yet if they collapse communication complexity.
	See \cite{BCMW10, BM06, FWW09, Mor16, SWH20, NSSRRB22PRL, EWC22b, Popescu14, Karvonen21, NPA08} for impossibility results and others.
	%Note: a new extension of the collapsing area was recently published in \cite{EWC22PRL}.
	}
	\label{fig: historical overview}
\end{figure*}

Thus, the study of such a collapse helps to understand why some correlations are not allowed in quantum mechanics.
In the past two decades, some boxes were shown to be \emph{collapsing}, \ie to collapse $\CC$, see FIG. \ref{fig: historical overview}. However, there is still a major open question: %since the 2000s:
do the other NLBs also collapse communication complexity?
In the present article, after generalizing the BBLMTU protocol \cite{BBLMTU06} (named after the authors' initials),
 we find a new sufficient condition that analytically extends the region of collapsing boxes, thus partially answering the open question.%\looseness=-1

	\section{I. Protocols}
	\label{section Protocols}
	
We define by induction a sequence of protocols $(\Protocol_k)_{k\geq0}$ generalizing the BBLMTU protocol \cite{BBLMTU06}, the main difference being that we add local uniformity.
\vspace{0.1cm}

	\textbf{Local Uniformization.}
We say that a box $\P\in\NS$ is \emph{locally uniform} if on each player's side, the box always outputs uniformly random bits: $\P(a\,|\,x)=1/2$ and $\P(b\,|\,y)=1/2$ for any $a,b,x,y\in\{0,1\}$, where $\P(a\,|\,x):=\sum_{b} \P(a, b\,|\,x,y)$ is independent of $y$ by non-signalling, and similarly for $\P(b\,|\,y)$.
The local uniformity will be useful many times in later computations.
However, some boxes $\P$ are \emph{not} locally uniform, \eg $\P=\frac{\PR+\P_0}{2}\in\NS$ where $\P_0$ is the box that always answers $(0,0)$ independently of the entries $(x,y)$.
This is why Alice and Bob use a ``trick": from $\P$ and a shared random bit $r$, they simulate another box $\widetilde\P\in\NS$ by adding $r$ to the outputs of $\P$.
That way, the new box $\wP$ is indeed locally uniform, and
importantly it has the same \emph{bias} $\eta_{xy}$ as the initial box $\P$ for all $x,y$:
\[
	\P\big(a\oplus b=xy\,|\,x,y\big)
	\,=\,
	\wP\big(a'\oplus b'=xy\,|\,x,y\big)
	\,=\,
	{\textstyle\frac{1+\eta_{xy}}{2}}
	\,,
\]
where $\eta_{xy}\in[-1,1]$ is defined as $\eta_{xy}
	:=
	2\,\P\big(a\oplus b=xy\,|\,x,y\big) -1
	=
	2\,\sum_{c} \P(c, c\oplus xy\,|\,x,y) -1
	$.
%\\

	\textbf{Protocol $\Protocol_0$.}
Fix a Boolean function $f:\{0,1\}^n\times\{0,1\}^m\to\{0,1\}$ and  strings $X\in\{0,1\}^n$ and $Y\in\{0,1\}^m$. The goal of the protocol is to perform a  \emph{distributed computation} of $f$ \cite{vD99}, \ie to find bits $a,b\in\{0,1\}$ known by Alice and Bob respectively such that:
\begin{equation}  \label{eq: distributed computation}  \tag{$\star$}
	a\oplus b = f(X,Y)\,.
\end{equation}
Assume Alice and Bob share uniformly random variables $Z\in\{0,1\}^m$ and $r\in\{0,1\}$. Upon receiving her string~$X$, Alice produces a bit $a:=f(X,Z)\oplus r$. As for Bob, if he receives a string $Y$ that is equal to $Z$, then he sets $b:=r$; otherwise he generates a local random variable $r_B$ and sets $b:=r_B$.
Now, separating the cases $Y=Z$ and $Y\neq Z$, the distributed computation \eqref{eq: distributed computation} is achieved with probability
\[
	p_0
	\,:=\,
	\PP\big( ``\eqref{eq: distributed computation}" \big)
	\,=\,
	\frac{1}{2^m} + \frac12\left(1-\frac{1}{2^m}\right)
	\,=\,
	\frac12 + \frac{1}{2^{m+1}}
	\,>\,
	\frac12\,.
\]
Due to the shared random bit $r$, note that the bit  $a$ is locally uniform:
$\PP(a\,|\,X) = 1/2$ for all $a,X$,
and similarly for $b$.
In total, this protocol uses $m+1$ shared random bits.
\\

	\textbf{Protocol $\Protocol_1$.}
As in $\Protocol_0$, we fix $f$, $X$ and $Y$, and we try to obtain the distributed computation \eqref{eq: distributed computation} with a better probability $p_1>p_0$. To that end, we realize four steps.

(a) We use the protocol $\Protocol_0$ independently three times, and obtain three pairs $(a_1,b_1), (a_2, b_2), (a_3,b_3)$ such that:
\[
	a_i\oplus b_i =
	\left\{
	\begin{array}{cl}
		f(X,Y) & \text{with prob. $p_0$}\\
		f(X,Y)\oplus1 & \text{with prob. $1-p_0$}\,.
	\end{array}
	\right.
\]
for $i=1,2,3$.
Note that this is a repetition code that will be decoded in (b) using a majority vote.

(b) The majority function $\Maj:\{0,1\}^3\to\{0,1\}$ is the function that outputs the most-appearing bit in its entries, \ie $\Maj(\alpha,\beta,\gamma) = \1_{\alpha+\beta+\gamma\geq2}$, where $\1$ is the indicator function.
The equality
\begin{equation} \label{eq: step b}  \tag{$\star\star$}
	f(X,Y) = \Maj\Big(a_1\oplus b_1,\, a_2\oplus b_2,\, a_3\oplus b_3\Big)
\end{equation}
occurs \iff at least two of the equations ``$f(X,Y) = a_i\oplus b_i$" ($i=1,2,3$) hold.
Denote $e_i:= a_i\oplus b_i \oplus f(X,Y)$,
and notice that ``$e_i=0$" \iff ``$a_i\oplus b_i=f(X,Y)$" for fixed $i$,
so that Equation \eqref{eq: step b} is equivalent to ``$\Maj(e_1,e_2,e_3)=0$".
But the $e_i$'s are independent
and $\PP(e_i=\alpha) = p_0^{1-\alpha} (1-p_0)^{\alpha}$ for $\alpha=0,1$, so Equality \eqref{eq: step b} holds with probability
\begin{align*}
	\PP\big(``\eqref{eq: step b}")
	\,=\,&
	\sum_{\underset{\text{s.t. }\Maj(\alpha,\beta,\gamma)=0}{\alpha,\beta,\gamma\in\{0,1\}}}  \PP(e_1=\alpha)\,\PP(e_2=\beta)\,\PP(e_3=\gamma)
	\\
	\,=\,&
	\sum_{\underset{\text{s.t. }\Maj(\alpha,\beta,\gamma)=0}{\alpha,\beta,\gamma\in\{0,1\}}} p_0^{3-\alpha-\beta-\gamma} (1-p_0)^{\alpha+\beta+\gamma}\,.
\end{align*}

(c) Now, we try to distributively compute the majority function. Observe that
\begin{multline*}
	\Maj\big( a_1\oplus b_1,\, a_2\oplus b_2,\, a_3\oplus b_3  \big)
	\\
	\,=\,
	\Maj(a_1,a_2,a_3) \oplus \Maj(b_1, b_2, b_3) \oplus r_1\, s_1 \oplus r_2\, s_2\,,
\end{multline*}
where $r_1:=a_1\oplus a_2$ and $s_1:=b_2\oplus b_3$ and $r_2:=a_2\oplus a_3$ and $s_2:=b_1\oplus b_2$. To distributively compute the two products $r_js_j$ ($j=1,2$), Alice and Bob use two copies of their locally uniform box $\wP$, see FIG. \ref{fig: distributively compute the products}.
They obtain pairs of bits $(a_1',b_1')$ and $(a_2',b_2')$ such that $a_j'\oplus b_j' = r_js_j$ with
bias $\eta_{r_j, s_j}$.
Consider the events $E_{\alpha,\beta,\gamma}:=``e_1=\alpha, e_2=\beta, e_3=\gamma"$ and $F_{\delta, \varepsilon, \zeta, \theta}:= ``r_1=\delta, r_2=\varepsilon, s_1=\zeta, s_2=\theta"$ where the greek letters are in $\{0,1\}$. On the one hand, under $E_{\alpha,\beta,\gamma}$ and $F_{\delta,\varepsilon,\zeta,\theta}$, we see that the equality
\begin{equation}  \label{eq: sum of products}
	r_1\,s_1 \oplus r_2\,s_2
	\,=\,
	(a_1'\oplus b_1')\oplus (a_2'\oplus b_2')
\end{equation}
holds \iff both of the equations $``r_js_j=a_j'\oplus b_j'"$ hold ($j=1,2$), or that none of them hold (because errors cancel out: $1\oplus 1=0$). Hence this equality holds with a bias $\eta_{\delta, \zeta}\eta_{\varepsilon, \theta}$:
\begin{equation}  \label{eq1}
	\PP\big( ``\eqref{eq: sum of products}" \,|\, E_{\alpha\beta\gamma}, F_{\delta\varepsilon\zeta\theta} \big) = \frac{1+\eta_{\delta, \zeta}\eta_{\varepsilon, \theta}}{2}\,,
\end{equation}
(conditionally to knowing $X$ and $Y$ as well).
On the other hand, seeing that the definitions of $r_j$ and $s_j$ lead to the relations $s_1=r_2\oplus e_2\oplus e_3$ and $s_2=r_1\oplus e_1\oplus e_2$, and using the independence of the $a_i$'s and their local uniform distribution in $\Protocol_0$, direct computations yield that:
\begin{equation}  \label{eq2}
	\PP(F_{\delta,\varepsilon,\zeta,\theta}\,|\,E_{\alpha,\beta,\gamma})
	\,=\,
	{\textstyle\frac{1}{4}}\, \1_{\zeta = \beta \oplus \gamma \oplus \varepsilon}\, \1_{\theta = \alpha \oplus \beta \oplus \delta}\,.
\end{equation}
Therefore, summing the products of \eqref{eq1} and \eqref{eq2} over all $\delta,\varepsilon,\zeta,\theta\in\{0,1\}$, we obtain:
\begin{align}
	\PP\big(``\eqref{eq: sum of products}" \,|\, E_{\alpha\beta\gamma}\big)
	=\,&   \label{eq: proba eq (1)}
	\hspace{-0.1cm}\sum_{\delta,\varepsilon\in\{0,1\}} \hspace{-0.cm}
	\frac{1+\eta_{\delta,\beta\oplus\gamma\oplus\varepsilon}\eta_{\varepsilon, \alpha\oplus\beta\oplus\delta}}{8}\,.
\end{align}
Hence, we obtain a distributed computation of the majority function as follows:
\begin{multline}   \tag{$\star\!\star\!\star$}   \label{eq: distributive computation of Maj}
	\Maj\big( a_1\oplus b_1,\, a_2\oplus b_2,\, a_3\oplus b_3  \big)\\
	\,=\,
	\scriptsize
	\underbrace{\Big( \Maj(a_1,a_2,a_3)\oplus a_1' \oplus a_2'\Big)}_{=:\,\widetilde a}
	\oplus \underbrace{\Big( \Maj(b_1,b_2,b_3)\oplus b_1' \oplus b_2'\Big)}_{=:\,\widetilde b}\,,
\end{multline}
with probability $\sum_{\delta,\varepsilon} \big(1+\eta_{\delta,\beta\oplus\gamma\oplus\varepsilon}\eta_{\varepsilon, \alpha\oplus\beta\oplus\delta}\big)/8$.
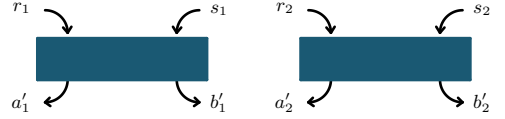
\begin{figure}[h]
	\definecolor{blue8}{rgb}{0.1, 0.35, 0.45}
\definecolor{qqttzz}{rgb}{0.,0.2,0.6}
\definecolor{rxsfyq}{rgb}{0.09019607843137255,0.1843137254901961,0.5019607843137255}
\definecolor{qqwuqq}{rgb}{0.,0.39215686274509803,0.}
\definecolor{wwqqzz}{rgb}{0.4,0.,0.6}
\definecolor{zzttqq}{rgb}{0.6,0.2,0.}

\begin{tikzpicture}[line cap=round,line join=round,>=triangle 45,x=1.0cm,y=1.0cm, scale=0.5, every node/.style={scale=0.8}]

% Arrows in the nonlocal box

\newcommand{\mynumber}{0.45}
\newcommand{\mywidth}{1}
\newcommand{\myhorizontalnumber}{-0.5}

%%%%%%%%%
%%%%%%%%%
%%%%%%%%%
%  BOX #1

%%% up-left arrow
\draw [line width=\mywidth pt] (-0.2,-1.2 + \mynumber)-- (-0.1,-1.04 + \mynumber);
\draw [line width=\mywidth pt] (-0.32,-1.06 + \mynumber)-- (-0.2,-1.2 + \mynumber);
\draw [shift={(-0.8,-1.2)},line width=\mywidth pt]  plot[domain=0.:1.5707963267948966,variable=\t]({1.*0.6*cos(\t r)+0.*0.6*sin(\t r)},{0.*0.6*cos(\t r)+1.*0.6*sin(\t r) + \mynumber});
%%% up-right arrow
\draw [line width=\mywidth pt] (3.2 + \myhorizontalnumber,-1.2 + \mynumber)-- (3.1 + \myhorizontalnumber,-1.04 + \mynumber);
\draw [line width=\mywidth pt] (3.2 + \myhorizontalnumber,-1.2 + \mynumber)-- (3.32 + \myhorizontalnumber,-1.06 + \mynumber);
\draw [shift={(3.8,-1.2)},line width=\mywidth pt]  plot[domain=1.5707963267948966:3.141592653589793,variable=\t]({1.*0.6*cos(\t r)+0.*0.6*sin(\t r) + \myhorizontalnumber},{0.*0.6*cos(\t r)+1.*0.6*sin(\t r) +  \mynumber});
%%% bottom-left arrow
\draw [line width=\mywidth pt] (-0.8,-2.6)-- (-0.68,-2.48);
\draw [line width=\mywidth pt] (-0.8,-2.6)-- (-0.66,-2.7);
\draw [shift={(-0.8,-2.)},line width=\mywidth pt]  plot[domain=-1.5707963267948966:0.,variable=\t]({1.*0.6*cos(\t r)+0.*0.6*sin(\t r)},{0.*0.6*cos(\t r)+1.*0.6*sin(\t r)});
%%% bottom-right arrow
\draw [line width=\mywidth pt] (3.8 + \myhorizontalnumber,-2.6)-- (3.68 + \myhorizontalnumber,-2.48);
\draw [line width=\mywidth pt] (3.8 + \myhorizontalnumber,-2.6)-- (3.66 + \myhorizontalnumber,-2.7);
\draw [shift={(3.8,-2.)},line width=\mywidth pt]  plot[domain=3.141592653589793:4.71238898038469,variable=\t]({1.*0.6*cos(\t r)+0.*0.6*sin(\t r) + \myhorizontalnumber},{0.*0.6*cos(\t r)+1.*0.6*sin(\t r)});

% Blue Rectangle
\fill[line width=\mywidth pt, color=blue8, fill=blue8, fill opacity=0.15000000596046448] (-1.,-2.) -- (4. + \myhorizontalnumber,-2.) -- (4. + \myhorizontalnumber,-0.9) -- (-1.,-0.9) -- cycle;
% old color: rxsfyq
\draw [line width=1.2*\mywidth pt,color=blue8] (-1.,-2.)-- (4. + \myhorizontalnumber,-2.);
\draw [line width=1.2*\mywidth pt,color=blue8] (4. + \myhorizontalnumber,-2.)-- (4. + \myhorizontalnumber,-0.9);
\draw [line width=1.2*\mywidth pt,color=blue8] (4. + \myhorizontalnumber,-0.9)-- (-1.,-0.9);
\draw [line width=1.2*\mywidth pt,color=blue8] (-1.,-0.9)-- (-1.,-2.);

\draw (-1.0, -0.4+0.3) node[anchor=east] {$r_1$};
\draw (3.9 + \myhorizontalnumber, -0.43+0.3) node[anchor=west] {$s_1$};
\draw (-1.0, -2.6) node[anchor=east] {$a_1'$};
\draw (3.9 + \myhorizontalnumber, -2.6) node[anchor=west] {$b_1'$};

\draw [color=blue8](1.45 + \myhorizontalnumber/2, -1.45) node {\textbf{$\widetilde\P$}};

%%%%%%%%%
%%%%%%%%%
%%%%%%%%%
%  BOX #2

\newcommand{\shiftBox}{7}

%%% up-left arrow
\draw [line width=\mywidth pt] (-0.2 + \shiftBox,-1.2 + \mynumber)-- (-0.1 + \shiftBox,-1.04 + \mynumber);
\draw [line width=\mywidth pt] (-0.32 + \shiftBox,-1.06 + \mynumber)-- (-0.2 + \shiftBox,-1.2 + \mynumber);
\draw [shift={(-0.8,-1.2)},line width=\mywidth pt]  plot[domain=0.:1.5707963267948966,variable=\t]({1.*0.6*cos(\t r)+0.*0.6*sin(\t r) + \shiftBox},{0.*0.6*cos(\t r)+1.*0.6*sin(\t r) + \mynumber});
%%% up-right arrow
\draw [line width=\mywidth pt] (3.2 + \myhorizontalnumber + \shiftBox,-1.2 + \mynumber)-- (3.1 + \myhorizontalnumber + \shiftBox,-1.04 + \mynumber);
\draw [line width=\mywidth pt] (3.2 + \myhorizontalnumber + \shiftBox,-1.2 + \mynumber)-- (3.32 + \myhorizontalnumber + \shiftBox,-1.06 + \mynumber);
\draw [shift={(3.8,-1.2)},line width=\mywidth pt]  plot[domain=1.5707963267948966:3.141592653589793,variable=\t]({1.*0.6*cos(\t r)+0.*0.6*sin(\t r) + \myhorizontalnumber + \shiftBox},{0.*0.6*cos(\t r)+1.*0.6*sin(\t r) +  \mynumber});
%%% bottom-left arrow
\draw [line width=\mywidth pt] (-0.8 + \shiftBox,-2.6)-- (-0.68 + \shiftBox,-2.48);
\draw [line width=\mywidth pt] (-0.8 + \shiftBox,-2.6)-- (-0.66 + \shiftBox,-2.7);
\draw [shift={(-0.8,-2.)},line width=\mywidth pt]  plot[domain=-1.5707963267948966:0.,variable=\t]({1.*0.6*cos(\t r)+0.*0.6*sin(\t r) + \shiftBox},{0.*0.6*cos(\t r)+1.*0.6*sin(\t r)});
%%% bottom-right arrow
\draw [line width=\mywidth pt] (3.8 + \myhorizontalnumber + \shiftBox,-2.6)-- (3.68 + \myhorizontalnumber + \shiftBox,-2.48);
\draw [line width=\mywidth pt] (3.8 + \myhorizontalnumber + \shiftBox,-2.6)-- (3.66 + \myhorizontalnumber + \shiftBox,-2.7);
\draw [shift={(3.8,-2.)},line width=\mywidth pt]  plot[domain=3.141592653589793:4.71238898038469,variable=\t]({1.*0.6*cos(\t r)+0.*0.6*sin(\t r) + \myhorizontalnumber + \shiftBox},{0.*0.6*cos(\t r)+1.*0.6*sin(\t r)});

% Blue Rectangle
\fill[line width=\mywidth pt, color=blue8, fill=blue8, fill opacity=0.15000000596046448] (-1. + \shiftBox,-2.) -- (4. + \myhorizontalnumber + \shiftBox,-2.) -- (4. + \myhorizontalnumber + \shiftBox,-0.9) -- (-1. + \shiftBox,-0.9) -- cycle;
% old color: rxsfyq
\draw [line width=1.2*\mywidth pt,color=blue8] (-1. + \shiftBox,-2.)-- (4. + \myhorizontalnumber + \shiftBox,-2.);
\draw [line width=1.2*\mywidth pt,color=blue8] (4. + \myhorizontalnumber + \shiftBox,-2.)-- (4. + \myhorizontalnumber + \shiftBox,-0.9);
\draw [line width=1.2*\mywidth pt,color=blue8] (4. + \myhorizontalnumber + \shiftBox,-0.9)-- (-1. + \shiftBox,-0.9);
\draw [line width=1.2*\mywidth pt,color=blue8] (-1. + \shiftBox,-0.9)-- (-1. + \shiftBox,-2.);

\draw (-1.0 + \shiftBox, -0.4+0.3) node[anchor=east] {$r_2$};
\draw (3.9 + \myhorizontalnumber + \shiftBox, -0.43+0.3) node[anchor=west] {$s_2$};
\draw (-1.0 + \shiftBox, -2.6) node[anchor=east] {$a_2'$};
\draw (3.9 + \myhorizontalnumber + \shiftBox, -2.6) node[anchor=west] {$b_2'$};

\draw [color=blue8](1.45 + \myhorizontalnumber/2 + \shiftBox, -1.45) node {\textbf{$\widetilde\P$}};

\end{tikzpicture}
	\caption{Distributively compute the products $r_1s_1$ and $r_2s_2$ with probability bias $\eta_{r_1, s_1}$ and $\eta_{r_2, s_2}$ respectively.}
	\label{fig: distributively compute the products}
\end{figure}

(d) Using steps (b) and (c), we obtain that the equality
\begin{equation}  \tag{$\star\!\star\!\star\star$}  \label{eq: 2nde distributed computation of f}
	f(X,Y) = \widetilde a\oplus \widetilde b
\end{equation}
holds \iff both \eqref{eq: step b} and \eqref{eq: distributive computation of Maj} hold, or that none of them hold (because errors cancel out: $1\oplus 1=0$). This happens with probability:
\begin{align*}
	p_1
	\,:=\,&
	\hspace{0.1cm}
	\small
	\PP\big( ``\eqref{eq: 2nde distributed computation of f}"\big)
	=
	\PP\big(\eqref{eq: step b} \land \eqref{eq: distributive computation of Maj}\big) + \PP\big(\neg\eqref{eq: step b} \land \neg\eqref{eq: distributive computation of Maj}\big)
	\\
	\,=\,&
	\hspace{-0.7cm}\sum_{\alpha,\beta,\gamma,\delta,\varepsilon\in\{0,1\}}\hspace{-0.7cm} \text{\scriptsize$
	p_0^{3\!-\!\alpha\!-\!\beta\!-\!\gamma} (1\!-\!p_0)^{\alpha\!+\!\beta\!+\!\gamma} \frac{1+ (\!-\!1)^{\Maj(\alpha,\beta,\gamma)}\eta_{\delta,\beta\oplus\gamma\oplus\varepsilon}\eta_{\varepsilon, \alpha\oplus\beta\oplus\delta}}{8}$}\,.
\end{align*}
where the sign ``$+$" from Equation \eqref{eq: proba eq (1)} was changed here into ``$(-1)^{\Maj(\alpha,\beta,\gamma)}$" because $\PP\big(\neg\eqref{eq: distributive computation of Maj}\big) = \sum_{\delta,\varepsilon} \big(1-\eta_{\delta,\beta\oplus\gamma\oplus\varepsilon}\eta_{\varepsilon, \alpha\oplus\beta\oplus\delta}\big)/8$, and this case exactly corresponds to the case where $\Maj(e_1,e_2,e_3)=1$.
Hence, we constructed a protocol $\Protocol_1$ based on $\Protocol_0$, and its probability of achieving \eqref{eq: distributed computation} is $p_1$. We will find in the next section a sufficient condition for which $p_1>p_0$.
In total, this protocol uses $3(m+2)-1$ shared random bits and $2$ copies of $\P$.
\\

\textbf{Protocol $\Protocol_{k+1}$ ($k\geq1$).} We proceed as in $\Protocol_1$: we build $\Protocol_{k+1}$ after performing $\Protocol_{k}$ three times.
In total, the protocol $\Protocol_{k+1}$ uses $3^{k+1}(m+2)-1$ shared random bits and $3^{k+1}-1$ copies of $\P$,
and it distributively computes $f$ with probability
\[
	p_{k+1}
	\,=\,
	\hspace{-0.8cm}\sum_{\alpha,\beta,\gamma,\delta,\varepsilon\in\{0,1\}}\hspace{-0.7cm} \text{\scriptsize$
	p_k^{3\!-\!\alpha\!-\!\beta\!-\!\gamma} \!(1\!\!-\!\!p_k)^{\alpha\!+\!\beta\!+\!\gamma} \frac{1+ (\!-\!1)^{\Maj(\alpha,\beta,\gamma)}\eta_{\delta,\beta\oplus\gamma\oplus\varepsilon}\eta_{\varepsilon, \alpha\oplus\beta\oplus\delta}}{8}$}\,.
\]

	\section{II. RESULT}
	
The probability bias associated to $p_{k+1}$ is $\mu_{k+1}:= 2\,p_{k+1}-1$ and it can be expressed as $\mu_{k+1} =F(\mu_k)$,
with:
\[
	F(\mu)
	\,:=\,
	\frac{\mu}{16} \,\left( A+B - \mu^2(A-B) \right)\,,
\]
where $A:= (\eta_{0,0} + \eta_{0,1} + \eta_{1,0} + \eta_{1,1})^2$
and $B:=2\eta_{0,0}^2 + 4\eta_{0,1}\eta_{1,0} + 2\eta_{1,1}^2$,
where $\eta_{xy}$ was introduced above as the probability bias of the box $\P$.
Note that $0\leq A\leq16$ and $-8\leq B \leq 8$ because $|\eta_{x,y}|\leq1$ for all $x,y$.

\begin{theorem}[Sufficient Condition]   \label{thm: sufficient condition}
	Nonlocal boxes for which $A+B>16$ collapse communication complexity.
\end{theorem}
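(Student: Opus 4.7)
The plan is to combine the protocols $\Protocol_k$ with a single communication bit: Bob transmits his share $\widetilde b$ to Alice, who outputs $\widetilde a \oplus \widetilde b$. The success probability is exactly $p_k = (1+\mu_k)/2$ with $\mu_{k+1} = F(\mu_k)$, so the theorem reduces to showing that for every $m$ there exists an index $k = k_m$ (independent of $f$) such that $p_{k_m} \geq p^{*}$ for some universal constant $p^{*} > 1/2$ depending only on $\P$. Equivalently, I need a lower bound on the iterates of $F$ starting from $\mu_0 = 2^{-m}$, uniform in $m$.

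To that end, I will study the one-dimensional dynamical system $F$ on $[0,1]$. Setting $F(\mu) = \mu$ and discarding the trivial root leaves $\mu^{2}(A-B) = A+B-16$. The inequalities $|\eta_{x,y}| \leq 1$ already noted in the text imply $B \leq 8$, so under the hypothesis $A+B > 16$ we get $A > 16 - B \geq 8 \geq B$; in particular $A-B > 0$ and
\[
	\mu^{*} \;:=\; \sqrt{\tfrac{A+B-16}{A-B}}
\]
is a well-defined fixed point in $(0,1]$, with $\mu^{*} \leq 1$ equivalent to $2B \leq 16$. A direct computation yields $F'(\mu) = \tfrac{1}{16}\bigl((A+B) - 3\mu^{2}(A-B)\bigr)$, vanishing at $\mu_{c}^{2} = (A+B)/(3(A-B))$; the comparison $\mu^{*} \leq \mu_{c}$ reduces to $A+B \leq 24$, which holds since $A \leq 16$ and $B \leq 8$. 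Hence $F$ is increasing on $[0, \mu^{*}]$, and on $(0, \mu^{*})$ the quantity $F(\mu) - \mu = (\mu/16)\bigl(A+B-16 - \mu^{2}(A-B)\bigr)$ is strictly positive.

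Combining these two facts, any orbit starting from $\mu_{0} \in (0, \mu^{*})$ is monotone increasing in $[0, \mu^{*}]$ and therefore converges to $\mu^{*}$ (the unique positive fixed point in this interval). Fixing $p^{*} := 1/2 + \mu^{*}/4$, for every $m$ I then argue as follows: either $\mu_{0} = 2^{-m} \geq \mu^{*}$, in which case $\Protocol_{0}$ already succeeds with probability $\geq (1+\mu^{*})/2 \geq p^{*}$; or $\mu_{0} < \mu^{*}$, and by the convergence there exists $k_{m}$ with $\mu_{k_{m}} \geq \mu^{*}/2$, so $\Protocol_{k_{m}}$ succeeds with probability $\geq p^{*}$. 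Since every $\Protocol_{k}$ uses only shared randomness and copies of $\P$, transmitting Bob's single bit $\widetilde b$ turns this into a $1$-bit protocol for $f$ with error bounded by $1 - p^{*}$. As $p^{*} > 1/2$ depends only on $\P$ (not on $f$), we conclude $\CC^{p^{*}}(f) \leq 1$ for every $f$, which is the collapse of CC. The main technical obstacle I anticipate is the monotonicity claim $\mu^{*} \leq \mu_{c}$: without it, expansion of $F$ near $0$ might not carry the iteration all the way up to a fixed point, and the orbit could overshoot or oscillate; the bounds $A \leq 16$ and $B \leq 8$ are precisely what pin $\mu^{*}$ below the critical point of $F$.
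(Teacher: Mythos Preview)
Your proposal is correct and follows essentially the same route as the paper: you locate the positive fixed point $\mu^{*}$ of $F$, verify via $A\le 16$, $B\le 8$ that $\mu^{*}$ lies below the critical point of $F$, and use monotonicity of $F$ on $[0,\mu^{*}]$ together with $F(\mu)>\mu$ on $(0,\mu^{*})$ to drive the iterates to $\mu^{*}$, then pick $p^{*}=\tfrac12+\tfrac{\mu^{*}}{4}$. Your write-up is in fact slightly more explicit than the paper's in two places---you spell out the final one-bit step (Bob sends $\widetilde b$) and you treat the case $\mu_0\ge\mu^{*}$ directly rather than by padding $f$ with dummy variables---but the dynamical analysis and the choice of threshold are identical.
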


\begin{proof}
	Assume $A+B>16$ ;
	this inequality has three consequences. (a) First, it gives $A-B>16-2B\geq0$ so that $F$ admits exactly three distinct \emph{fixed points} in $\R$:
	\[
		\left\{ 0,\, \pm \textstyle\sqrt{\frac{A+B-16}{A-B}} \right\}
		\,=:\,
		\big\{ 0,\, \pm \mu_* \big\}\,.
	\]
	
	(b) Second,
		as $\frac{dF}{d\mu}(\mu) = \frac{1}{16} \big(A+B - 3\mu^2(A-B)\big)$,
		the assumption implies that $F$ is increasing on $[-\mu_{\max{}}, \mu_{\max{}}]$, where $\mu_{\max{}}:=\sqrt{\frac{A+B}{3(A-B)}}>0$. Moreover, the assumption gives $\frac{\partial F}{\partial \mu}(0)>1$, so that the fixed point $0$ is repulsive.
		
	(c) Finally, as $A+B\leq 24$, we have $A+B-16 \leq \frac{A+B}{3}$. Therefore $\mu_*$ is smaller than or equal to $\mu_{\max{}}$ and:
	$$[0,\mu_*]\subseteq [-\mu_{\max{}}, \mu_{\max{}}]\,.$$

	Now, let $\P\in\NS$ be a box satisfying $A+B>16$.
	We provide Alice and Bob with as many shared random bits and as many copies of $\P$ as they want.
	We show that there exists a constant $p>1/2$ such that any arbitrary Boolean function $f:\{0,1\}^{n}\times\{0,1\}^m\to\{0,1\}$ can be distributively computed by Alice and Bob with probability $\geq p$, which means that communication complexity collapses.
	Using Section \hyperlink{section Protocols}{I}, the protocol $\Protocol_0$ enables to distributively compute $f$ with probability $p_0=(1+1/2^m)/2$, \ie with bias $\mu_0 = 1/2^m>0$. Up to adding muted variables in the entries of $f$, we may assume that $m$ is large enough so that $\mu_0\in(0,\mu_*)$. Then, combining (a), (b) and (c), we get that the sequence $(\mu_k)_k$ converges to the fixed point $\mu_*>0$. We set $p:=(1+\mu_*/2)/2 >1/2$ (or replace $\mu_*/2$ by any choice of $\mu\in(0,\mu_*)$), and we know that there exists a protocol $\Protocol_k$ for some $k$ large enough such that the probability $p_k$ of correctly distributively computing $f$ satisfies $p_k>p$. Finally, note that $p$ does \emph{not} depend on $f$: it only depends on $\mu_*$, which only depends on the $\eta_{x,y}$'s, which themselves only depend on $\P$. Hence communication complexity collapses.
\end{proof}

		\section{III. Cases of interest}
		
\textbf{Case 1: $\PR-\PRprime-\I$.} We consider a box $\P$ that is in the slice of $\NS$ passing through both $\PR$ and $\PR'$ and $\I$, studied in \cite{Branciard11}. In this case $\eta_{0,0} = \eta_{1,1}$ and $\eta_{0,1}=\eta_{1,0}$, and the condition $A+B>16$ of the Theorem reads as
$
	\eta_{0,0}^2 + \eta_{0,0}\eta_{0,1} + \eta_{0,1}^2 > 2
$.
We make a change of coordinates using the bias of winning at $\CHSH$ $\sigma=(\eta_{0,0}+\eta_{0,1})/2$ and the one of winning at $\CHSH'$  $\sigma'=(-\eta_{0,0} + \eta_{0,1})/2$, and we obtain:
\[
	 \sigma^2 + \textstyle\frac13\,\sigma'^2  > \frac23 \,,
	 \quad\quad\text{or}\quad\quad
	 \textstyle\frac13\,\sigma^2 + \sigma'^2  > \frac23 \,,
\]
where the second equation holds by changing the role of $\sigma$ and $\sigma'$ in the first one
(indeed, we may do it because flipping bits $x$ and $y$ allows to go from $\CHSH$ to $\CHSH'$).
These equations give rise to the purple collapsing area drawn in FIG. \ref{fig: case 1} (a). Interestingly, on the vertical axis, we find the same result as in \cite{BBLMTU06}: taking $\sigma'=0$, it is enough to have $\sigma>\sqrt{2/3}$, \ie to win at $\CHSH$ with probability $\frac{1+\sigma}{2}>\frac{3+\sqrt{6}}{6}\approx 0.91$.

\vspace{0.3cm}
\textbf{Case 2: $\PR-\SR-\I$.} We consider a box $\P$ that is in the slice of $\NS$ passing through both $\PR$ and $\SR$ and $\I$, studied in \cite{BS09}. In this case $\eta_{0,0} = \eta_{0,1} = \eta_{1,0}$, and the condition $A+B>16$ of the Theorem reads as
$
	5\eta_{0,0}^2 + 2\eta_{0,0}\eta_{1,1} + \eta_{1,1}^2 > \frac{16}{3}
$.
We make a change of coordinates using $\sigma=(3\eta_{0,0}+\eta_{1,1})/4$ and  $\sigma'=(\eta_{0,0} - \eta_{1,1})/4$, and we obtain:
\[
	 \sigma^2 + \sigma'^2  > \textstyle\frac23 \,.
\]
The induced collapsing area is represented in FIG. \ref{fig: case 1} (b). The same results also holds if we replace $\SR$ by any convex combination of $\P_0$ and $\P_1$, which are the boxes that always output respectively $(0,0)$ and $(1,1)$ independently of the entries $(x,y)$.

\definecolor{darkgreen}{rgb}{0.3, 0.6, 0.09}

\begin{figure}[h]
	% Colors
\definecolor{qqttzz}{rgb}{0.,0.2,0.6}
\definecolor{rxsfyq}{rgb}{0.09019607843137255,0.1843137254901961,0.5019607843137255}
\definecolor{qqwuqq}{rgb}{0.,0.39215686274509803,0.}
\definecolor{wwqqzz}{rgb}{0.4,0.,0.6}
\definecolor{zzttqq}{rgb}{0.6,0.2,0.}

\definecolor{red7}{rgb}{0.8, 0.36, 0.36}
\definecolor{green7}{rgb}{0.3, 0.7, 0.09}

\newcommand{\projectedWHITE}{\color{black!05}}
\newcommand{\projectedGREEN}{\color{green7!90}}

\begin{tikzpicture}[scale=0.5,every node/.style={scale=0.5}, scale=0.7]

\draw node[left] at (0,10) {\Large(a)} ;

%Axes
\draw[->] (5,-1) -- (5,11) ;
\draw node[anchor = east] at (4.7 ,10.7) {$\sigma$};
\draw[->] (-1,5) -- (11,5) ;
\draw node[anchor = east] at (11, 4.5) {$\sigma'$};

\newcommand{\hauteur}{5-0.12}
\newcommand{\largeur}{5-0.05}

\draw node[left] at (\largeur, 0) {$-1$} ; 
\draw node[below] at (0, \hauteur) {$-1$} ; 
\draw[fill=black] (0, 5) circle (0.05) ;
\draw node[left] at (\largeur,10) {$1$} ; 
\draw node[below] at (10,\hauteur) {$1$} ;  
\draw[fill=black] (10, 5) circle (0.05) ;

%NS
\draw[fill=blue7!75] (5,0) -- (10,5) -- (5,10) -- (0,5) -- (5,0) ;

%Q
\draw[fill=red7] (5,5) circle ({10*sqrt(1/2)/2}) ;

%\draw node at (5, 2) {\large$\mathcal{Q}$} ;

%%%%L
%%%\draw[fill=red7!90] (2.5,2.5) -- (7.5,2.5) -- (7.5,7.5) -- (2.5,7.5) -- (2.5,2.5) ;
%%%%\draw node at (5, 3.1) {\large$\mathcal{L}$} ;

% Collapsing area
%\draw[dashed] (5,0) circle ({10*((3+sqrt(6))/6)}) ;
%\draw[dashed] (0,5) circle ({10*((3+sqrt(6))/6)}) ;
\draw[dashed] ($(5,0) + (65:9.08)$) arc (65:115:9.08) ;
\draw[dashed] ($(0,5) + (-25:9.08)$) arc (-25:25:9.08) ;
\draw[dashed] ($(5,10) + (65+180:9.08)$) arc (65+180:115+180:9.08) ;
\draw[dashed] ($(10,5) + (-25+180:9.08)$) arc (-25+180:25+180:9.08) ;

  \draw[fill=violet!90]
   (5,10)
   --
   (5 + 0.9175, 9.082)
   --
   (5 - 0.9175, 9.082)
  -- cycle ;
  
  \draw[fill=black]
  ($(5,0) + (84:9.08)$) arc (84:96:9.08)
   --
   (5 - 0.9175, 9.082)
  --
  (5 + 0.9175, 9.082)
  -- cycle ;

  \draw[fill=violet!90]
   (10, 5)
   --
   (9.082, 5 - 0.9175)
   --
   (9.082, 5 + 0.9175)
  -- cycle ;
  
  \draw[fill=black]
  ($(0,5) + (-6:9.08)$) arc (-6:6:9.08)
   --
   (9.082, 5 + 0.9175)
  --
  (9.082, 5 - 0.9175)
  -- cycle ;

  \draw[fill=violet!90]
   (5,0)
   --
   (5 + 0.9175, 0.9175)
   --
   (5 - 0.9175, 0.9175)
  -- cycle ;
  
  \draw[fill=black]
  ($(5,10) + (84+180:9.08)$) arc (84+180:96+180:9.08)
   --
   (5 + 0.9175, 0.9175)
  --
  (5 - 0.9175, 0.9175)
  -- cycle ;

  \draw[fill=violet!90]
   (0, 5)
   --
   (0.9175, 5 - 0.9175)
   --
   (0.9175, 5 + 0.9175)
  -- cycle ;
  
  \draw[fill=black]
  ($(10,5) + (-6+180:9.08)$) arc (-6+180:6+180:9.08)
   --
   (0.9175, 5 - 0.9175)
  --
  (0.9175, 5 + 0.9175)
  -- cycle ;

%%%%   ZOOM     %%%%%%%%%%%%%%%

\newcommand{\mywidth}{0.3}
\newcommand{\myfactor}{3}

\draw[line width=\mywidth pt] (5, 9.3) -- (5 + 1.2, 9.3) -- (5+1.2, 8.8) -- (5, 8.8) -- cycle;
\draw[line width = \mywidth pt] (5 + 1.2, 9.3) -- (5 + 2, 9.8);
\draw[line width = \mywidth pt] (5 + 2, 9.8) -- (5 + 2, 9.8 + 0.5*\myfactor) -- (5+2 + 1.2*\myfactor, 9.8 + 0.5*\myfactor) -- (5+2 + 1.2*\myfactor, 9.8) -- cycle;

  \draw[fill=blue7!80, line width = \mywidth pt]
   (5 + 2, 9.8 + 0.5*\myfactor)
   --
   (5 + 2 + 1.2*\myfactor - 0.5*\myfactor, 9.8 + 0.5*\myfactor)
   --
   (5+2 + 1.2*\myfactor-0.015, 9.8)
   --
   (5+2, 9.8)
  -- cycle ;

  \draw[fill=violet!90, line width = \mywidth pt]
   (5 + 2, 9.8 + 0.5*\myfactor)
   --
   (5 + 2 + 1.2*\myfactor - 0.5*\myfactor, 9.8 + 0.5*\myfactor)
   --
   (5+2 + 0.9175*\myfactor, 9.8 + 0.282*\myfactor)
   --
   (5+2, 9.8 + 0.282*\myfactor)
  -- cycle ;
  
  \draw[fill=black, line width = \mywidth pt]
  ($(5+2,-14.5) + (83.35:9.8 + 0.282*\myfactor+14.5)$) arc (83.35:89:9.8 + 0.282*\myfactor+14.5)
   --
   (5+2, 9.8 + 0.282*\myfactor)
  --
  (5+2 + 0.9175*\myfactor, 9.8 + 0.282*\myfactor)
  -- cycle ;

%%% POINTS %%%%

\draw[fill=black] (5,10) circle (0.05) ;
\draw[fill=black] (5.1,10) node[right] {$\PR$};

% PR prime
\draw[fill=black] (10, 5) circle (0.05) ;
\draw[fill=black] (10.2, 5) node[above] {$\PRprime$};

% PR bar
\draw[fill=black] (5, 0) circle (0.05);
\draw[fill=black] (5.1, 0.) node[right] {$\PRbar$};

% PR prime bar
\draw[fill=black] (0, 5) circle (0.05) ;
\draw[fill=black] (-0.2, 5) node[above] {$\PRprimebar$};

% middle
\draw[fill=black] (5, 5) circle (0.05) node[above right] {$\I$};

%%%%%%%%%%
%%%%%%%%%%
%%%%%%%%%%
%    CASE #2

\newcommand{\leftshift}{12.5}

\draw node[left] at (0.5+\leftshift,10) {\large(b)} ;

%Axes
\draw[->] (5+\leftshift,-1) -- (5+\leftshift,11) ;
\draw node[anchor = east] at (4.7+\leftshift ,10.7) {$\sigma$};
\draw[->] (-0.5+\leftshift,5) -- (11+\leftshift,5) ;
\draw node[anchor = east] at (11+\leftshift, 5.5) {$\sigma'$};

\renewcommand{\hauteur}{5-0.12}
\renewcommand{\largeur}{5-0.05}

\draw node[left] at (\largeur+\leftshift, 0) {$-1$} ; 
\draw node[below] at (0+\leftshift, \hauteur) {$-1$} ; 
\draw[fill=black] (0+\leftshift, 5) circle (0.05) ;
\draw node[left] at (\largeur+\leftshift,10) {$1$} ; 
\draw node[below] at (10+\leftshift,\hauteur) {$1$} ;  
\draw[fill=black] (10+\leftshift, 5) circle (0.05) ;

%NS
\draw[fill=blue7!80] (5+\leftshift,0) -- (7.5+\leftshift, 7.5) -- (5+\leftshift,10) -- (2.5+\leftshift, 2.5) -- (5+\leftshift,0) ;

%Q
\coordinate (c1) at (5+\leftshift,5);
  \draw[fill=red7]
  ($(c1) + (45:35.355339059mm)$) arc (45:90:35.355339059mm)
  --
  ($(5+\leftshift,8) + (90:5.35mm)$) arc (90:155:6.80mm)
  -- 
  (4.17+\leftshift,7.5);

  \draw[fill=red7]
  ($(c1) + (225:35.355339059mm)$) arc (225:270:35.355339059mm)
  -- 
  ($(5+\leftshift,2) + (270:5.35mm)$) arc (270:335:6.80mm)
  --
  (5.83+\leftshift,2.5);

%L
\draw[fill=red7, dash pattern = on 0pt off 2pt] (2.5+\leftshift,2.5) -- (5.83+\leftshift,2.5) -- (7.5+\leftshift,7.5) -- (4.17+\leftshift,7.5) -- (2.5+\leftshift,2.5) ;

% Collapsing area

\draw[dashed] (5+\leftshift,5) circle ({10*((3+sqrt(6))/6)-5}) ;

  \draw[fill=violet!90]
  (5+\leftshift,10)
  --
  (5 + 0.9175+\leftshift, 9.082)
   --
   (5 - 0.3+\leftshift, 9.082)
  -- cycle ;
  
  \draw[fill=black]
  ($(5+\leftshift, 5) + (75: 4.08)$) arc (75:94.4: 4.08)
   --
   (5 - 0.27+\leftshift, 9.082)
  --
  (5 + 0.9175+\leftshift, 9.082)
  -- cycle ;

  \draw[fill=violet!90]
  (5+\leftshift,0)
  --
  (5 - 0.9175+\leftshift, 0.9175)
   --
   (5 + 0.3+\leftshift, 0.9175)
  -- cycle ;
  
  \draw[fill=black]
  ($(5+\leftshift, 5) + (255: 4.08)$) arc (255:274: 4.08)
   --
   (5 + 0.27+\leftshift, 0.9175)
  --
  (5 - 0.9175+\leftshift, 0.9175)
  -- cycle ;

\draw[line width = 0.6pt, color = violet!90] (5+\leftshift, 10) -- (7.5+\leftshift, 7.5);
\draw[line width = 0.6pt, color = violet!90] (5+\leftshift, 0) -- (2.5+\leftshift, 2.5);
\draw (5+\leftshift, 0) -- (7.5+\leftshift, 7.5);
\draw (5+\leftshift, 10) -- (2.5+\leftshift, 2.5);

%%%%   ZOOM     %%%%%%%%%%%%%%%

  \draw[fill=blue7!80, line width = \mywidth pt]
   (5+\leftshift + 2, 9.8 + 0.5*\myfactor)
   --
   (5+\leftshift + 2 + 1.2*\myfactor - 0.5*\myfactor, 9.8 + 0.5*\myfactor)
   --
   (5+\leftshift+2 + 1.2*\myfactor-0.015, 9.8)
   --
   (5+\leftshift+2, 9.8)
  -- cycle ;

  \draw[fill=violet!90, line width = \mywidth pt]
   (5+\leftshift + 2, 9.8 + 0.5*\myfactor)
   --
   (5+\leftshift + 2 + 1.2*\myfactor - 0.5*\myfactor, 9.8 + 0.5*\myfactor)
   --
   (5+\leftshift+2 + 0.9175*\myfactor, 9.8 + 0.282*\myfactor)
   --
   (5+\leftshift+2, 9.8 + 0.282*\myfactor)
  -- cycle ;
  
  \draw[fill=black, line width = \mywidth pt]
  ($(5+\leftshift+2,-1) + (74.1:9.8 + 0.282*\myfactor+1)$) arc (74.1:89:9.8 + 0.282*\myfactor+1)
   --
   (5+\leftshift+2, 9.8 + 0.282*\myfactor)
  --
  (5+\leftshift+2 + 0.9175*\myfactor, 9.8 + 0.282*\myfactor)
  -- cycle ;
  
  \draw[color = violet!90, line width = \mywidth*2.3 pt] 
   (5+\leftshift + 2 + 1.2*\myfactor - 0.5*\myfactor, 9.8 + 0.5*\myfactor)
   --
   (5+\leftshift+2 + 1.2*\myfactor-0.015, 9.8);
  
  \draw[line width=\mywidth pt] (5+\leftshift, 9.3) -- (5+\leftshift + 1.2, 9.3) -- (5+\leftshift+1.2, 8.8) -- (5+\leftshift, 8.8) -- cycle;
\draw[line width = \mywidth pt] (5 + 1.2+\leftshift, 9.3) -- (5+\leftshift + 2, 9.8);
\draw[line width = \mywidth pt] (5 + 2+\leftshift, 9.8) -- (5+\leftshift + 2, 9.8 + 0.5*\myfactor) -- (5+\leftshift+2 + 1.2*\myfactor, 9.8 + 0.5*\myfactor) -- (5+\leftshift+2 + 1.2*\myfactor, 9.8) -- cycle;

%%% POINTS %%%%

\foreach \Point/\PointLabel 
in { (5+\leftshift, 10)/\PR
       }
\draw[fill=black] \Point circle (0.05) node[above right] {$\PointLabel$};

% SR
\draw[fill=black] (7.5+\leftshift, 7.5) circle (0.05) node[below right] {$\SR$};

% PR bar
\draw[fill=black] (5+\leftshift, 0) circle (0.05);
\draw[fill=black] (5.3+\leftshift, 0.25) node[right] {$\PRbar$};

\draw[fill=black] (2.5+\leftshift, 2.5) circle (0.05);
\draw[fill=black] (2.5+\leftshift, 2.5) node[above left] {$\SRbar$};

% middle
\draw[fill=black] (5+\leftshift, 5) circle (0.05) node[above right] {$\I$};

\end{tikzpicture}
	\caption{\emph{Two slices of $\NS$} (colors online).
	In \textbf{\sffamily\color{violet} purple} is drawn the prior (analytically) known collapsing region.
	We extend it using Theorem \ref{thm: sufficient condition}: the \textbf{\sffamily\color{black} black} area is the new analytic collapsing region.
	The \textbf{\sffamily\color{darkred} red} area corresponds to the area of non-collapsing boxes.
	The \textbf{\sffamily\color{blue7}blue} area is the gap to be filled in red or purple (open problem).
	Drawings (a) and (b) represent the slices of $\NS$ passing through resp. $\{\PR, \PRprime,\I\}$ (case 1, finding interest in \cite{Branciard11}) and $\{\PR,\SR,\I\}$ (case 2, finding interest in \cite{BS09}).
	}
	\label{fig: case 1}
\end{figure}
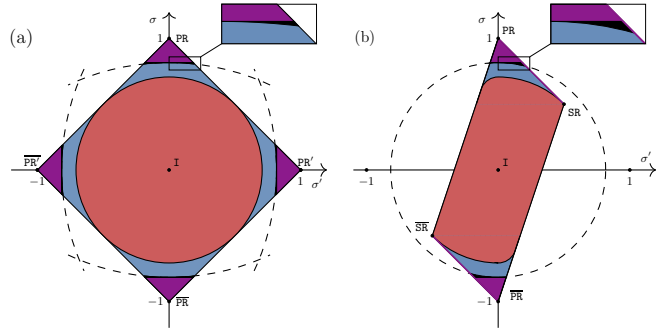

\begin{remark}
	Even in comparison to previous numerical results, our protocol finds strictly new collapsing boxes. Indeed, for instance consider boxes in the black region of Figure~\ref{fig: case 1} that are close to the vertical axis: they are not distillable by means of the wirings of~\cite{BS09, EWC22PRL}, but our result shows that they are still collapsing.
\end{remark}

		\section{Conclusion}
		
After generalizing the BBLMTU protocol, we found in Theorem \ref{thm: sufficient condition} a new sufficient condition for a box to collapse communication complexity, with the following advantages:
(1) it is valid in the whole $8$-dimensional convex set $\NS$, in contrast with the analytical result of~\cite{BS09} (it holds only in the segment joining $\PR$ and $\SR$),  and
(2) it is completely analytical, with an explicit formula for the boundary of the new collapsing area, in contrast to previous numerical results~\cite{BS09, EWC22PRL} (as far as we know, the boundary of these two results has not yet been analytically computed).
In FIG. \ref{fig: case 1}, we presented two examples of new collapsing regions.
Note that the importance of our result is emphasized by considering the many known impossibility results \cite{BG15, Mor16, SWH20}.
According to our present intuition of Nature \cite{vD99, BBLMTU06, BS09, BG15}, a consequence is that our new collapsing boxes are unlikely to appear in Nature.

Hence, we partially answer the open question, but there is still a gap to be filled: what other nonlocal boxes collapse communication complexity?

		\section{Acknowledgements}
		
This work is based in part on~\cite{Marcothese}.
P.\,B.~is partly supported by the Institute for Quantum Technologies in Occitanie;
%A.\,B.~is partly supported by the Natural Sciences and Engineering Research Council of Canada (NSERC);
M.-O.\,P.~is supported by the Fonds de recherche du Québec --- Nature et technologies (FRQNT).
We acknowledge the support of the Natural Sciences and Engineering Research Council of Canada (NSERC) [funding Reference No. ALLRP/580876-2022 and No. RGPIN-2022-05167].

\bibliographystyle{apsrev4-2} % Tell bibtex which bibliography style to use
\bibliography{Bibliography} % Tell bibtex which .bib file to use (this one is some example file in TexLive's file tree)

%apsrev4-2.bst 2019-01-14 (MD) hand-edited version of apsrev4-1.bst
%Control: key (0)
%Control: author (72) initials jnrlst
%Control: editor formatted (1) identically to author
%Control: production of article title (-1) disabled
%Control: page (0) single
%Control: year (1) truncated
%Control: production of eprint (0) enabled
\begin{thebibliography}{32}%
\makeatletter
\providecommand \@ifxundefined [1]{%
 \@ifx{#1\undefined}
}%
\providecommand \@ifnum [1]{%
 \ifnum #1\expandafter \@firstoftwo
 \else \expandafter \@secondoftwo
 \fi
}%
\providecommand \@ifx [1]{%
 \ifx #1\expandafter \@firstoftwo
 \else \expandafter \@secondoftwo
 \fi
}%
\providecommand \natexlab [1]{#1}%
\providecommand \enquote  [1]{``#1''}%
\providecommand \bibnamefont  [1]{#1}%
\providecommand \bibfnamefont [1]{#1}%
\providecommand \citenamefont [1]{#1}%
\providecommand \href@noop [0]{\@secondoftwo}%
\providecommand \href [0]{\begingroup \@sanitize@url \@href}%
\providecommand \@href[1]{\@@startlink{#1}\@@href}%
\providecommand \@@href[1]{\endgroup#1\@@endlink}%
\providecommand \@sanitize@url [0]{\catcode `\\12\catcode `\$12\catcode
  `\&12\catcode `\#12\catcode `\^12\catcode `\_12\catcode `\%12\relax}%
\providecommand \@@startlink[1]{}%
\providecommand \@@endlink[0]{}%
\providecommand \url  [0]{\begingroup\@sanitize@url \@url }%
\providecommand \@url [1]{\endgroup\@href {#1}{\urlprefix }}%
\providecommand \urlprefix  [0]{URL }%
\providecommand \Eprint [0]{\href }%
\providecommand \doibase [0]{https://doi.org/}%
\providecommand \selectlanguage [0]{\@gobble}%
\providecommand \bibinfo  [0]{\@secondoftwo}%
\providecommand \bibfield  [0]{\@secondoftwo}%
\providecommand \translation [1]{[#1]}%
\providecommand \BibitemOpen [0]{}%
\providecommand \bibitemStop [0]{}%
\providecommand \bibitemNoStop [0]{.\EOS\space}%
\providecommand \EOS [0]{\spacefactor3000\relax}%
\providecommand \BibitemShut  [1]{\csname bibitem#1\endcsname}%
\let\auto@bib@innerbib\@empty
%</preamble>
\bibitem [{\citenamefont {Clauser}\ and\ \citenamefont {Shimony}(1978)}]{CS78}%
  \BibitemOpen
  \bibfield  {author} {\bibinfo {author} {\bibfnamefont {J.~F.}\ \bibnamefont
  {Clauser}}\ and\ \bibinfo {author} {\bibfnamefont {A.}~\bibnamefont
  {Shimony}},\ }\href {https://doi.org/10.1088/0034-4885/41/12/002} {\bibfield
  {journal} {\bibinfo  {journal} {Reports on Progress in Physics}\ }\textbf
  {\bibinfo {volume} {41}},\ \bibinfo {pages} {1881} (\bibinfo {year}
  {1978})}\BibitemShut {NoStop}%
\bibitem [{\citenamefont {Aspect}\ \emph {et~al.}(1982)\citenamefont {Aspect},
  \citenamefont {Grangier},\ and\ \citenamefont {Roger}}]{AGR82}%
  \BibitemOpen
  \bibfield  {author} {\bibinfo {author} {\bibfnamefont {A.}~\bibnamefont
  {Aspect}}, \bibinfo {author} {\bibfnamefont {P.}~\bibnamefont {Grangier}},\
  and\ \bibinfo {author} {\bibfnamefont {G.}~\bibnamefont {Roger}},\ }\href
  {https://doi.org/10.1103/PhysRevLett.49.91} {\bibfield  {journal} {\bibinfo
  {journal} {Phys. Rev. Lett.}\ }\textbf {\bibinfo {volume} {49}},\ \bibinfo
  {pages} {91} (\bibinfo {year} {1982})}\BibitemShut {NoStop}%
\bibitem [{\citenamefont {Hensen}\ \emph {et~al.}(2015)\citenamefont {Hensen},
  \citenamefont {Bernien}, \citenamefont {Dr{\'e}au}, \citenamefont {Reiserer},
  \citenamefont {Kalb}, \citenamefont {Blok}, \citenamefont {Ruitenberg},
  \citenamefont {Vermeulen}, \citenamefont {Schouten}, \citenamefont
  {Abell{\'a}n}, \citenamefont {Amaya}, \citenamefont {Pruneri}, \citenamefont
  {Mitchell}, \citenamefont {Markham}, \citenamefont {Twitchen}, \citenamefont
  {Elkouss}, \citenamefont {Wehner}, \citenamefont {Taminiau},\ and\
  \citenamefont {Hanson}}]{HBD+15}%
  \BibitemOpen
  \bibfield  {author} {\bibinfo {author} {\bibfnamefont {B.}~\bibnamefont
  {Hensen}}, \bibinfo {author} {\bibfnamefont {H.}~\bibnamefont {Bernien}},
  \bibinfo {author} {\bibfnamefont {A.~E.}\ \bibnamefont {Dr{\'e}au}}, \bibinfo
  {author} {\bibfnamefont {A.}~\bibnamefont {Reiserer}}, \bibinfo {author}
  {\bibfnamefont {N.}~\bibnamefont {Kalb}}, \bibinfo {author} {\bibfnamefont
  {M.~S.}\ \bibnamefont {Blok}}, \bibinfo {author} {\bibfnamefont
  {J.}~\bibnamefont {Ruitenberg}}, \bibinfo {author} {\bibfnamefont {R.~F.~L.}\
  \bibnamefont {Vermeulen}}, \bibinfo {author} {\bibfnamefont {R.~N.}\
  \bibnamefont {Schouten}}, \bibinfo {author} {\bibfnamefont {C.}~\bibnamefont
  {Abell{\'a}n}}, \bibinfo {author} {\bibfnamefont {W.}~\bibnamefont {Amaya}},
  \bibinfo {author} {\bibfnamefont {V.}~\bibnamefont {Pruneri}}, \bibinfo
  {author} {\bibfnamefont {M.~W.}\ \bibnamefont {Mitchell}}, \bibinfo {author}
  {\bibfnamefont {M.}~\bibnamefont {Markham}}, \bibinfo {author} {\bibfnamefont
  {D.~J.}\ \bibnamefont {Twitchen}}, \bibinfo {author} {\bibfnamefont
  {D.}~\bibnamefont {Elkouss}}, \bibinfo {author} {\bibfnamefont
  {S.}~\bibnamefont {Wehner}}, \bibinfo {author} {\bibfnamefont {T.~H.}\
  \bibnamefont {Taminiau}},\ and\ \bibinfo {author} {\bibfnamefont
  {R.}~\bibnamefont {Hanson}},\ }\href {https://doi.org/10.1038/nature15759}
  {\bibfield  {journal} {\bibinfo  {journal} {Nature}\ }\textbf {\bibinfo
  {volume} {526}},\ \bibinfo {pages} {682} (\bibinfo {year}
  {2015})}\BibitemShut {NoStop}%
\bibitem [{\citenamefont {Bell}(1964)}]{Bell64}%
  \BibitemOpen
  \bibfield  {author} {\bibinfo {author} {\bibfnamefont {J.~S.}\ \bibnamefont
  {Bell}},\ }\href {https://doi.org/10.1103/PhysicsPhysiqueFizika.1.195}
  {\bibfield  {journal} {\bibinfo  {journal} {Physics Physique Fizika}\
  }\textbf {\bibinfo {volume} {1}},\ \bibinfo {pages} {195} (\bibinfo {year}
  {1964})}\BibitemShut {NoStop}%
\bibitem [{\citenamefont {Clauser}\ \emph {et~al.}(1969)\citenamefont
  {Clauser}, \citenamefont {Horne}, \citenamefont {Shimony},\ and\
  \citenamefont {Holt}}]{CHSH69}%
  \BibitemOpen
  \bibfield  {author} {\bibinfo {author} {\bibfnamefont {J.~F.}\ \bibnamefont
  {Clauser}}, \bibinfo {author} {\bibfnamefont {M.~A.}\ \bibnamefont {Horne}},
  \bibinfo {author} {\bibfnamefont {A.}~\bibnamefont {Shimony}},\ and\ \bibinfo
  {author} {\bibfnamefont {R.~A.}\ \bibnamefont {Holt}},\ }\href
  {https://doi.org/10.1103/PhysRevLett.23.880} {\bibfield  {journal} {\bibinfo
  {journal} {Phys. Rev. Lett.}\ }\textbf {\bibinfo {volume} {23}},\ \bibinfo
  {pages} {880} (\bibinfo {year} {1969})}\BibitemShut {NoStop}%
\bibitem [{\citenamefont {Cirel'son}(1980)}]{Tsirelson80}%
  \BibitemOpen
  \bibfield  {author} {\bibinfo {author} {\bibfnamefont {B.~S.}\ \bibnamefont
  {Cirel'son}},\ }\href {https://doi.org/10.1007/BF00417500} {\bibfield
  {journal} {\bibinfo  {journal} {Letters in Mathematical Physics}\ }\textbf
  {\bibinfo {volume} {4}},\ \bibinfo {pages} {93} (\bibinfo {year}
  {1980})}\BibitemShut {NoStop}%
\bibitem [{\citenamefont {Popescu}\ and\ \citenamefont
  {Rohrlich}(1994)}]{PR94}%
  \BibitemOpen
  \bibfield  {author} {\bibinfo {author} {\bibfnamefont {S.}~\bibnamefont
  {Popescu}}\ and\ \bibinfo {author} {\bibfnamefont {D.}~\bibnamefont
  {Rohrlich}},\ }\href {https://doi.org/10.1007/BF02058098} {\bibfield
  {journal} {\bibinfo  {journal} {Foundations of Physics}\ }\textbf {\bibinfo
  {volume} {24}},\ \bibinfo {pages} {379} (\bibinfo {year} {1994})}\BibitemShut
  {NoStop}%
\bibitem [{\citenamefont {Shannon}(1961)}]{Shannon61}%
  \BibitemOpen
  \bibfield  {author} {\bibinfo {author} {\bibfnamefont {C.~E.}\ \bibnamefont
  {Shannon}},\ }\href
  {https://projecteuclid.org/proceedings/berkeley-symposium-on-mathematical-statistics-and-probability/proceedings-of-the-fourth-berkeley-symposium-on-mathematical-statistics-and-probability-volume-1-contributions-to-the-theory-of-statistics/toc/bsmsp/1200512153}
  {\bibfield  {journal} {\bibinfo  {journal} {Proceedings of the Fourth
  Berkeley Symposium on Mathematical Statistics and Probability}\ }\textbf
  {\bibinfo {volume} {1}},\ \bibinfo {pages} {611} (\bibinfo {year}
  {1961})}\BibitemShut {NoStop}%
\bibitem [{\citenamefont {Barrett}\ \emph {et~al.}(2005)\citenamefont
  {Barrett}, \citenamefont {Linden}, \citenamefont {Massar}, \citenamefont
  {Pironio}, \citenamefont {Popescu},\ and\ \citenamefont {Roberts}}]{BLM+05}%
  \BibitemOpen
  \bibfield  {author} {\bibinfo {author} {\bibfnamefont {J.}~\bibnamefont
  {Barrett}}, \bibinfo {author} {\bibfnamefont {N.}~\bibnamefont {Linden}},
  \bibinfo {author} {\bibfnamefont {S.}~\bibnamefont {Massar}}, \bibinfo
  {author} {\bibfnamefont {S.}~\bibnamefont {Pironio}}, \bibinfo {author}
  {\bibfnamefont {S.}~\bibnamefont {Popescu}},\ and\ \bibinfo {author}
  {\bibfnamefont {D.}~\bibnamefont {Roberts}},\ }\href
  {https://doi.org/10.1103/PhysRevA.71.022101} {\bibfield  {journal} {\bibinfo
  {journal} {Phys. Rev. A}\ }\textbf {\bibinfo {volume} {71}},\ \bibinfo
  {pages} {022101} (\bibinfo {year} {2005})}\BibitemShut {NoStop}%
\bibitem [{\citenamefont {Goh}\ \emph {et~al.}(2018)\citenamefont {Goh},
  \citenamefont {Kaniewski}, \citenamefont {Wolfe}, \citenamefont {V\'ertesi},
  \citenamefont {Wu}, \citenamefont {Cai}, \citenamefont {Liang},\ and\
  \citenamefont {Scarani}}]{GKW+18}%
  \BibitemOpen
  \bibfield  {author} {\bibinfo {author} {\bibfnamefont {K.~T.}\ \bibnamefont
  {Goh}}, \bibinfo {author} {\bibfnamefont {J.}~\bibnamefont {Kaniewski}},
  \bibinfo {author} {\bibfnamefont {E.}~\bibnamefont {Wolfe}}, \bibinfo
  {author} {\bibfnamefont {T.}~\bibnamefont {V\'ertesi}}, \bibinfo {author}
  {\bibfnamefont {X.}~\bibnamefont {Wu}}, \bibinfo {author} {\bibfnamefont
  {Y.}~\bibnamefont {Cai}}, \bibinfo {author} {\bibfnamefont {Y.-C.}\
  \bibnamefont {Liang}},\ and\ \bibinfo {author} {\bibfnamefont
  {V.}~\bibnamefont {Scarani}},\ }\href
  {https://doi.org/10.1103/PhysRevA.97.022104} {\bibfield  {journal} {\bibinfo
  {journal} {Phys. Rev. A}\ }\textbf {\bibinfo {volume} {97}},\ \bibinfo
  {pages} {022104} (\bibinfo {year} {2018})}\BibitemShut {NoStop}%
\bibitem [{\citenamefont {Branciard}(2011)}]{Branciard11}%
  \BibitemOpen
  \bibfield  {author} {\bibinfo {author} {\bibfnamefont {C.}~\bibnamefont
  {Branciard}},\ }\href {https://doi.org/10.1103/PhysRevA.83.032123} {\bibfield
   {journal} {\bibinfo  {journal} {Phys. Rev. A}\ }\textbf {\bibinfo {volume}
  {83}},\ \bibinfo {pages} {032123} (\bibinfo {year} {2011})}\BibitemShut
  {NoStop}%
\bibitem [{\citenamefont {Yao}(1979)}]{Yao79}%
  \BibitemOpen
  \bibfield  {author} {\bibinfo {author} {\bibfnamefont {A.~C.-C.}\
  \bibnamefont {Yao}},\ }in\ \href {https://doi.org/10.1145/800135.804414}
  {\emph {\bibinfo {booktitle} {Proceedings of the Eleventh Annual ACM
  Symposium on Theory of Computing}}},\ \bibinfo {series and number} {STOC
  '79}\ (\bibinfo  {publisher} {Association for Computing Machinery},\ \bibinfo
  {address} {New York, NY, USA},\ \bibinfo {year} {1979})\ pp.\ \bibinfo
  {pages} {209--213}\BibitemShut {NoStop}%
\bibitem [{\citenamefont {Kushilevitz}\ and\ \citenamefont
  {Nisan}(1996)}]{KN96}%
  \BibitemOpen
  \bibfield  {author} {\bibinfo {author} {\bibfnamefont {E.}~\bibnamefont
  {Kushilevitz}}\ and\ \bibinfo {author} {\bibfnamefont {N.}~\bibnamefont
  {Nisan}},\ }\href {https://doi.org/10.1017/CBO9780511574948} {\emph {\bibinfo
  {title} {Communication Complexity}}}\ (\bibinfo  {publisher} {Cambridge
  University Press},\ \bibinfo {year} {1996})\BibitemShut {NoStop}%
\bibitem [{\citenamefont {Rao}\ and\ \citenamefont {Yehudayoff}(2020)}]{RY20}%
  \BibitemOpen
  \bibfield  {author} {\bibinfo {author} {\bibfnamefont {A.}~\bibnamefont
  {Rao}}\ and\ \bibinfo {author} {\bibfnamefont {A.}~\bibnamefont
  {Yehudayoff}},\ }\href {https://doi.org/10.1017/9781108671644} {\emph
  {\bibinfo {title} {Communication Complexity: and Applications}}}\ (\bibinfo
  {publisher} {Cambridge University Press},\ \bibinfo {year}
  {2020})\BibitemShut {NoStop}%
\bibitem [{\citenamefont {Brassard}\ \emph {et~al.}(2006)\citenamefont
  {Brassard}, \citenamefont {Buhrman}, \citenamefont {Linden}, \citenamefont
  {M\'ethot}, \citenamefont {Tapp},\ and\ \citenamefont {Unger}}]{BBLMTU06}%
  \BibitemOpen
  \bibfield  {author} {\bibinfo {author} {\bibfnamefont {G.}~\bibnamefont
  {Brassard}}, \bibinfo {author} {\bibfnamefont {H.}~\bibnamefont {Buhrman}},
  \bibinfo {author} {\bibfnamefont {N.}~\bibnamefont {Linden}}, \bibinfo
  {author} {\bibfnamefont {A.~A.}\ \bibnamefont {M\'ethot}}, \bibinfo {author}
  {\bibfnamefont {A.}~\bibnamefont {Tapp}},\ and\ \bibinfo {author}
  {\bibfnamefont {F.}~\bibnamefont {Unger}},\ }\href
  {https://doi.org/10.1103/PhysRevLett.96.250401} {\bibfield  {journal}
  {\bibinfo  {journal} {Phys. Rev. Lett.}\ }\textbf {\bibinfo {volume} {96}},\
  \bibinfo {pages} {250401} (\bibinfo {year} {2006})}\BibitemShut {NoStop}%
\bibitem [{\citenamefont {van Dam}(1999)}]{vD99}%
  \BibitemOpen
  \bibfield  {author} {\bibinfo {author} {\bibfnamefont {W.}~\bibnamefont {van
  Dam}},\ }\emph {\bibinfo {title} {Nonlocality $\&$ Communication
  Complexity}},\ \href@noop {} {\bibinfo {type} {Ph.d. thesis}},\ \bibinfo
  {school} {University of Oxford} (\bibinfo {year} {1999})\BibitemShut
  {NoStop}%
\bibitem [{\citenamefont {Brunner}\ and\ \citenamefont
  {Skrzypczyk}(2009)}]{BS09}%
  \BibitemOpen
  \bibfield  {author} {\bibinfo {author} {\bibfnamefont {N.}~\bibnamefont
  {Brunner}}\ and\ \bibinfo {author} {\bibfnamefont {P.}~\bibnamefont
  {Skrzypczyk}},\ }\href {https://doi.org/10.1103/PhysRevLett.102.160403}
  {\bibfield  {journal} {\bibinfo  {journal} {Phys. Rev. Lett.}\ }\textbf
  {\bibinfo {volume} {102}},\ \bibinfo {pages} {160403} (\bibinfo {year}
  {2009})}\BibitemShut {NoStop}%
\bibitem [{\citenamefont {Beigi}\ and\ \citenamefont {Gohari}(2015)}]{BG15}%
  \BibitemOpen
  \bibfield  {author} {\bibinfo {author} {\bibfnamefont {S.}~\bibnamefont
  {Beigi}}\ and\ \bibinfo {author} {\bibfnamefont {A.}~\bibnamefont {Gohari}},\
  }\href {https://doi.org/10.1109/TIT.2015.2452253} {\bibfield  {journal}
  {\bibinfo  {journal} {IEEE Transactions on Information Theory}\ }\textbf
  {\bibinfo {volume} {61}},\ \bibinfo {pages} {5185} (\bibinfo {year}
  {2015})}\BibitemShut {NoStop}%
\bibitem [{\citenamefont {Cleve}\ \emph {et~al.}(1999)\citenamefont {Cleve},
  \citenamefont {van Dam}, \citenamefont {Nielsen},\ and\ \citenamefont
  {Tapp}}]{CvDNT99}%
  \BibitemOpen
  \bibfield  {author} {\bibinfo {author} {\bibfnamefont {R.}~\bibnamefont
  {Cleve}}, \bibinfo {author} {\bibfnamefont {W.}~\bibnamefont {van Dam}},
  \bibinfo {author} {\bibfnamefont {M.}~\bibnamefont {Nielsen}},\ and\ \bibinfo
  {author} {\bibfnamefont {A.}~\bibnamefont {Tapp}},\ }\href
  {https://doi.org/10.1007/3-540-49208-9_4} {\emph {\bibinfo {title} {Quantum
  Computing and Quantum Communications}}}\ (\bibinfo  {publisher} {Springer
  Berlin Heidelberg},\ \bibinfo {address} {Berlin, Heidelberg},\ \bibinfo
  {year} {1999})\ pp.\ \bibinfo {pages} {61--74}\BibitemShut {NoStop}%
\bibitem [{\citenamefont {Eftaxias}\ \emph {et~al.}(2023)\citenamefont
  {Eftaxias}, \citenamefont {Weilenmann},\ and\ \citenamefont
  {Colbeck}}]{EWC22PRL}%
  \BibitemOpen
  \bibfield  {author} {\bibinfo {author} {\bibfnamefont {G.}~\bibnamefont
  {Eftaxias}}, \bibinfo {author} {\bibfnamefont {M.}~\bibnamefont
  {Weilenmann}},\ and\ \bibinfo {author} {\bibfnamefont {R.}~\bibnamefont
  {Colbeck}},\ }\href {https://doi.org/10.1103/PhysRevLett.130.100201}
  {\bibfield  {journal} {\bibinfo  {journal} {Phys. Rev. Lett.}\ }\textbf
  {\bibinfo {volume} {130}},\ \bibinfo {pages} {100201} (\bibinfo {year}
  {2023})}\BibitemShut {NoStop}%
\bibitem [{\citenamefont {Navascu{\'e}s}\ \emph {et~al.}(2015)\citenamefont
  {Navascu{\'e}s}, \citenamefont {Guryanova}, \citenamefont {Hoban},\ and\
  \citenamefont {Ac{\'\i}n}}]{NGHA15}%
  \BibitemOpen
  \bibfield  {author} {\bibinfo {author} {\bibfnamefont {M.}~\bibnamefont
  {Navascu{\'e}s}}, \bibinfo {author} {\bibfnamefont {Y.}~\bibnamefont
  {Guryanova}}, \bibinfo {author} {\bibfnamefont {M.~J.}\ \bibnamefont
  {Hoban}},\ and\ \bibinfo {author} {\bibfnamefont {A.}~\bibnamefont
  {Ac{\'\i}n}},\ }\href {https://doi.org/10.1038/ncomms7288} {\bibfield
  {journal} {\bibinfo  {journal} {Nature Communications}\ }\textbf {\bibinfo
  {volume} {6}},\ \bibinfo {pages} {6288} (\bibinfo {year} {2015})}\BibitemShut
  {NoStop}%
\bibitem [{\citenamefont {Buhrman}\ \emph {et~al.}(2010)\citenamefont
  {Buhrman}, \citenamefont {Cleve}, \citenamefont {Massar},\ and\ \citenamefont
  {de~Wolf}}]{BCMW10}%
  \BibitemOpen
  \bibfield  {author} {\bibinfo {author} {\bibfnamefont {H.}~\bibnamefont
  {Buhrman}}, \bibinfo {author} {\bibfnamefont {R.}~\bibnamefont {Cleve}},
  \bibinfo {author} {\bibfnamefont {S.}~\bibnamefont {Massar}},\ and\ \bibinfo
  {author} {\bibfnamefont {R.}~\bibnamefont {de~Wolf}},\ }\href
  {https://doi.org/10.1103/RevModPhys.82.665} {\bibfield  {journal} {\bibinfo
  {journal} {Rev. Mod. Phys.}\ }\textbf {\bibinfo {volume} {82}},\ \bibinfo
  {pages} {665} (\bibinfo {year} {2010})}\BibitemShut {NoStop}%
\bibitem [{\citenamefont {Broadbent}\ and\ \citenamefont
  {M{\'e}thot}(2006)}]{BM06}%
  \BibitemOpen
  \bibfield  {author} {\bibinfo {author} {\bibfnamefont {A.}~\bibnamefont
  {Broadbent}}\ and\ \bibinfo {author} {\bibfnamefont {A.~A.}\ \bibnamefont
  {M{\'e}thot}},\ }\href
  {https://doi.org/https://doi.org/10.1016/j.tcs.2005.08.035} {\bibfield
  {journal} {\bibinfo  {journal} {Theoretical Computer Science}\ }\textbf
  {\bibinfo {volume} {358}},\ \bibinfo {pages} {3} (\bibinfo {year}
  {2006})}\BibitemShut {NoStop}%
\bibitem [{\citenamefont {Forster}\ \emph {et~al.}(2009)\citenamefont
  {Forster}, \citenamefont {Winkler},\ and\ \citenamefont {Wolf}}]{FWW09}%
  \BibitemOpen
  \bibfield  {author} {\bibinfo {author} {\bibfnamefont {M.}~\bibnamefont
  {Forster}}, \bibinfo {author} {\bibfnamefont {S.}~\bibnamefont {Winkler}},\
  and\ \bibinfo {author} {\bibfnamefont {S.}~\bibnamefont {Wolf}},\ }\href
  {https://doi.org/10.1103/PhysRevLett.102.120401} {\bibfield  {journal}
  {\bibinfo  {journal} {Phys. Rev. Lett.}\ }\textbf {\bibinfo {volume} {102}},\
  \bibinfo {pages} {120401} (\bibinfo {year} {2009})}\BibitemShut {NoStop}%
\bibitem [{\citenamefont {Mori}(2016)}]{Mor16}%
  \BibitemOpen
  \bibfield  {author} {\bibinfo {author} {\bibfnamefont {R.}~\bibnamefont
  {Mori}},\ }\href {https://doi.org/10.1103/PhysRevA.94.052130} {\bibfield
  {journal} {\bibinfo  {journal} {Phys. Rev. A}\ }\textbf {\bibinfo {volume}
  {94}},\ \bibinfo {pages} {052130} (\bibinfo {year} {2016})}\BibitemShut
  {NoStop}%
\bibitem [{\citenamefont {Shutty}\ \emph {et~al.}(2020)\citenamefont {Shutty},
  \citenamefont {Wootters},\ and\ \citenamefont {Hayden}}]{SWH20}%
  \BibitemOpen
  \bibfield  {author} {\bibinfo {author} {\bibfnamefont {N.}~\bibnamefont
  {Shutty}}, \bibinfo {author} {\bibfnamefont {M.}~\bibnamefont {Wootters}},\
  and\ \bibinfo {author} {\bibfnamefont {P.}~\bibnamefont {Hayden}},\ }in\
  \href {https://doi.org/10.1109/FOCS46700.2020.00028} {\emph {\bibinfo
  {booktitle} {2020 IEEE 61st Annual Symposium on Foundations of Computer
  Science (FOCS)}}}\ (\bibinfo {year} {2020})\ pp.\ \bibinfo {pages}
  {206--217}\BibitemShut {NoStop}%
\bibitem [{\citenamefont {Naik}\ \emph {et~al.}(2023)\citenamefont {Naik},
  \citenamefont {Sidhardh}, \citenamefont {Sen}, \citenamefont {Roy},
  \citenamefont {Rai},\ and\ \citenamefont {Banik}}]{NSSRRB22PRL}%
  \BibitemOpen
  \bibfield  {author} {\bibinfo {author} {\bibfnamefont {S.~G.}\ \bibnamefont
  {Naik}}, \bibinfo {author} {\bibfnamefont {G.~L.}\ \bibnamefont {Sidhardh}},
  \bibinfo {author} {\bibfnamefont {S.}~\bibnamefont {Sen}}, \bibinfo {author}
  {\bibfnamefont {A.}~\bibnamefont {Roy}}, \bibinfo {author} {\bibfnamefont
  {A.}~\bibnamefont {Rai}},\ and\ \bibinfo {author} {\bibfnamefont
  {M.}~\bibnamefont {Banik}},\ }\href
  {https://doi.org/10.1103/PhysRevLett.130.220201} {\bibfield  {journal}
  {\bibinfo  {journal} {Phys. Rev. Lett.}\ }\textbf {\bibinfo {volume} {130}},\
  \bibinfo {pages} {220201} (\bibinfo {year} {2023})}\BibitemShut {NoStop}%
\bibitem [{\citenamefont {Eftaxias}\ \emph {et~al.}(2022)\citenamefont
  {Eftaxias}, \citenamefont {Weilenmann},\ and\ \citenamefont
  {Colbeck}}]{EWC22b}%
  \BibitemOpen
  \bibfield  {author} {\bibinfo {author} {\bibfnamefont {G.}~\bibnamefont
  {Eftaxias}}, \bibinfo {author} {\bibfnamefont {M.}~\bibnamefont
  {Weilenmann}},\ and\ \bibinfo {author} {\bibfnamefont {R.}~\bibnamefont
  {Colbeck}},\ }\bibfield  {journal} {\bibinfo  {journal} {preprint}\ }\href
  {https://doi.org/10.48550/arxiv.2209.04474} {10.48550/arxiv.2209.04474}
  (\bibinfo {year} {2022})\BibitemShut {NoStop}%
\bibitem [{\citenamefont {Popescu}(2014)}]{Popescu14}%
  \BibitemOpen
  \bibfield  {author} {\bibinfo {author} {\bibfnamefont {S.}~\bibnamefont
  {Popescu}},\ }\href {https://doi.org/10.1038/nphys2916} {\bibfield  {journal}
  {\bibinfo  {journal} {Nature Physics}\ }\textbf {\bibinfo {volume} {10}},\
  \bibinfo {pages} {264} (\bibinfo {year} {2014})}\BibitemShut {NoStop}%
\bibitem [{\citenamefont {Karvonen}(2021)}]{Karvonen21}%
  \BibitemOpen
  \bibfield  {author} {\bibinfo {author} {\bibfnamefont {M.}~\bibnamefont
  {Karvonen}},\ }\href {https://doi.org/10.1103/PhysRevLett.127.160402}
  {\bibfield  {journal} {\bibinfo  {journal} {Phys. Rev. Lett.}\ }\textbf
  {\bibinfo {volume} {127}},\ \bibinfo {pages} {160402} (\bibinfo {year}
  {2021})}\BibitemShut {NoStop}%
\bibitem [{\citenamefont {Navascu{\'{e}}s}\ \emph {et~al.}(2008)\citenamefont
  {Navascu{\'{e}}s}, \citenamefont {Pironio},\ and\ \citenamefont
  {Ac{\'{\i}}n}}]{NPA08}%
  \BibitemOpen
  \bibfield  {author} {\bibinfo {author} {\bibfnamefont {M.}~\bibnamefont
  {Navascu{\'{e}}s}}, \bibinfo {author} {\bibfnamefont {S.}~\bibnamefont
  {Pironio}},\ and\ \bibinfo {author} {\bibfnamefont {A.}~\bibnamefont
  {Ac{\'{\i}}n}},\ }\href {https://doi.org/10.1088/1367-2630/10/7/073013}
  {\bibfield  {journal} {\bibinfo  {journal} {New Journal of Physics}\ }\textbf
  {\bibinfo {volume} {10}},\ \bibinfo {pages} {073013} (\bibinfo {year}
  {2008})}\BibitemShut {NoStop}%
\bibitem [{\citenamefont {Proulx}(2018)}]{Marcothese}%
  \BibitemOpen
  \bibfield  {author} {\bibinfo {author} {\bibfnamefont {M.-O.}\ \bibnamefont
  {Proulx}},\ }\emph {\bibinfo {title} {A Limit on Quantum Nonlocality from an
  Information Processing Principle}},\ \href
  {https://doi.org/10.20381/ruor-22258} {\bibinfo {type} {{M.Sc. thesis}}},\
  \bibinfo  {school} {University of Ottawa} (\bibinfo {year}
  {2018})\BibitemShut {NoStop}%
\end{thebibliography}%

\end{document}